\documentclass[conference]{IEEEtran}

\ifCLASSINFOpdf
\else
\fi

 \usepackage[caption=false,font=footnotesize]{subfig}

\usepackage{url}

\usepackage{amsmath}
\usepackage{amssymb}
\usepackage{xcolor}
\usepackage[most]{tcolorbox}
\usepackage{fontawesome5}
\usepackage{ifthen}
\usepackage{pifont}
\usepackage{bm}
\usepackage{placeins}
\usepackage{array}
\usepackage{booktabs}
\usepackage{tabularx}
\usepackage{mdframed}

\newcommand{\arrowcloak}{\textit{ArrowCloak}}
\newcommand{\arrowmatch}{\textit{ArrowMatch}}

\newcommand{\attackname}{\textit{ArrowRevelio}}

\newcounter{observation}[section]
\renewcommand{\theobservation}{O\arabic{observation}}

\newcounter{guarantee}[section]
\renewcommand{\theguarantee}{G\arabic{guarantee}}

\newtheorem{lemma}{Lemma}
\newtheorem{theorem}{Theorem}

\newcommand{\Int}{\mathbb{Z}}

\newcommand{\Zmod}[1]{\mathbb{Z}_{#1}}

\newcommand{\Unif}{\mathcal{U}}                     %
\newcommand{\samples}{\leftarrow}                   %

\newcommand{\lwedim}{n}                             %
\newcommand{\lwesamp}{m}                            %
\newcommand{\lwemod}{q}                             %
\newcommand{\errdist}{\chi}                         %

\newcommand{\tp}{^{\top}}
\newcommand{\inv}{^{-1}}

\newcommand{\Id}{I}
\newcommand{\ones}{\bm{1}}
\newcommand{\zerovec}{\bm{0}}
\newcommand{\stdbasis}[1]{\bm{e}_{#1}}
\DeclareMathOperator{\diag}{diag}

\DeclareMathOperator{\spn}{span}

\DeclareMathOperator{\cossim}{cos}

\newcommand{\opnorm}[1]{\lVert #1 \rVert_{2}}
\newcommand{\frobnorm}[1]{\lVert #1 \rVert_{F}}

\newcommand{\round}[1]{\left\lfloor #1 \right\rceil}
\newcommand{\ip}[2]{\langle #1 , #2 \rangle}
\newcommand{\colof}[2]{#1_{:,#2}}                   %
\newcommand{\rowof}[2]{#1_{#2,:}}                   %

\definecolor{rqresultbg}{HTML}{E6E6E6}

\newenvironment{rqresultbox}[1]{%
  \begin{tcolorbox}[
    enhanced,
    breakable,
    colback=rqresultbg,
    colframe=black!75,
    boxrule=0.5pt,
    arc=0mm,
    outer arc=0mm,
    left=1.5mm,
    right=1.5mm,
    top=1.2mm,
    bottom=1.2mm,
    before skip=6pt,
    after skip=6pt
  ]
  \textbf{Results (RQ #1):}\ %
}{%
  \end{tcolorbox}
}

\newif\ifdraft %
\drafttrue

\newif\ifarxiv
\arxivtrue

\begin{document}
\title{Hiding Directions, Leaking Structure: Breaking ArrowCloak through Low-Rank Structure}

\ifarxiv
\IEEEoverridecommandlockouts
\author{\IEEEauthorblockN{Beijie Liu\IEEEauthorrefmark{1}\IEEEauthorrefmark{2},
Junyi Ouyang\IEEEauthorrefmark{1}\IEEEauthorrefmark{2},
Haoxuan Xu\IEEEauthorrefmark{1}\IEEEauthorrefmark{2},
Vincent Quentin Ulitzsch\IEEEauthorrefmark{3},\\
Potung Yu\IEEEauthorrefmark{2},
Yajie Zhao\IEEEauthorrefmark{2}, and
Mengyuan Li\IEEEauthorrefmark{2}}
\IEEEauthorblockA{\IEEEauthorrefmark{2}University of Southern California,
\{beijieli, junyiouy, xuhaoxua, potungyu, mli49061\}@usc.edu, zhao@ict.usc.edu}
\IEEEauthorblockA{\IEEEauthorrefmark{3}MIT, viniul@mit.edu}
\thanks{\textsuperscript{*} Equal contribution.}}
\else
\author{\IEEEauthorblockN{Anonymous Submission}}
\fi

\ifarxiv
\else
    \IEEEoverridecommandlockouts
    \makeatletter\def\@IEEEpubidpullup{6.5\baselineskip}\makeatother
    \IEEEpubid{\parbox{\columnwidth}{
    		Network and Distributed System Security (NDSS) Symposium 2027\\
    		22--26 March 2027, Seoul, Republic of Korea\\
    		ISBN 978-1-970672-09-1\\  
    		https://dx.doi.org/10.14722/ndss.2027.[23$|$24]xxxx\\
    		www.ndss-symposium.org
    }
    \hspace{\columnsep}\makebox[\columnwidth]{}}
\fi
\newcommand{\bheading}[1]{{\vspace{2pt}\noindent{\textbf{#1}}}}
 \maketitle

\begin{abstract}
TEE-shielded inference keeps sensitive state in a trusted execution environment
(TEE) while offloading linear algebra to an untrusted accelerator. Wang et al.,
in \textit{Game of Arrows} (USENIX Security 2025), showed that five widely
adopted lightweight defenses preserve vector directions and introduced
\arrowmatch{} to exploit this leakage. They then proposed \arrowcloak{}, which
adds a different multiple of one shared mask direction to each vector and bases
its weight-recovery hardness argument on Learning with Errors (LWE).
\arrowcloak{} successfully reduces \arrowmatch{} to near-black-box levels.

In this paper, we revisit \arrowcloak{} from cryptographic and structural
perspectives. Its LWE formulation does not by itself establish standard LWE
hardness: the reduction direction, quantized arithmetic, and joint instance
distribution do not meet the required conditions. Reusing one mask direction
thus leaves a recoverable rank-one component across the released matrix. We
exploit this structure with \attackname{}, an end-to-end, query-free recovery
attack. Given a public checkpoint and the obfuscated weights, \attackname{}
removes the masking subspace, recovers the hidden one-to-one correspondence,
and reconstructs protected weights without transformation secrets, victim
queries, or fine-tuning data. Across six model--task pairs spanning
classification, segmentation, and diffusion, \attackname{} recovers
$99.92$--$100\%$ of hidden vector correspondences. Reconstructed classification
models achieve $94.39$--$99.54\%$ victim agreement and differ by at most $1.59$
percentage points in accuracy; the recovered segmentation model achieves
$98.35\%$ output agreement. These findings suggest that lightweight protection
should address both per-vector geometry and joint structure across released
weights.
\end{abstract}

\IEEEpeerreviewmaketitle

\section{Introduction}

Deploying proprietary neural-network models on users' devices enables local,
low-latency inference, but it places valuable model weights on hardware
controlled by potential adversaries~\cite{sun2021mind,nayan2024sok}. Unlike a
cloud service that exposes only a query interface, on-device deployment gives
an adversary a white-box view of the software stack, memory, and hardware
interfaces, creating a direct risk of model extraction.

A natural defense is to protect model weights within a trusted execution
environment (TEE). However, running an entire modern model inside a CPU TEE
forgoes GPU-class acceleration, while GPU TEEs remain unavailable on much of
today's deployed hardware, with mature support largely confined to recent
NVIDIA server-grade accelerators~\cite{nvidia_h100_cc}. This gap has motivated split-inference systems that
keep sensitive state and lightweight protection logic inside a CPU TEE while
outsourcing linear algebra to an untrustworthy accelerator. 

Slalom~\cite{slalom}
pioneered this design by masking activations and verifying GPU computation.
Subsequent systems adapted this split to model confidentiality. Following prior
terminology, we call the LLM instance of this architecture \textit{TEE-Shielded LLM
Partitioning (TSLP)}~\cite{slalom,mo2020darknetz,hou2022modelprotection,
shen2022soter,sun2023shadownet,zhou2023nnsplitter,zhang2024noprivacy,
zhang2024groupcover,li2024translinkguard,xu2024tempo,yang2024kvshield,
sun2025tsqp,li2025coreguard,gameofarrows2025,xiong2025loro}.
These systems either shield selected model components or obfuscate outsourced
weights. For weight-obfuscation-based TSLP, keeping trusted-side work cheaper
than the matrix multiplication sent to the accelerator favors lightweight
per-vector transformations.

Wang et al., in \textit{Game of Arrows} (USENIX Security'25)~\cite{gameofarrows2025},
systematically examined the confidentiality consequences of this low-overhead
obfuscation strategy. They showed that these lightweight per-vector schemes,
widely adopted in
TSLP~\cite{shen2022soter,sun2023shadownet,li2024translinkguard,xu2024tempo,
yang2024kvshield,sun2025tsqp,li2025coreguard}, remain vulnerable to
direction-based recovery: their dominant scaling and permutation operations
change vector magnitudes and positions but preserve
directions. Their analysis assumes models fine-tuned from public checkpoints, with transformed weights exposed to an untrusted GPU while transformation secrets and recovery remain protected by the TEE.
Specifically, their \arrowmatch{} attack uses these directions to recover the hidden permutation
between public and exposed vectors, then uses query-labeled data to fit vector
lengths and construct a surrogate. To counter this
attack, the same work introduced \arrowcloak{}, a lightweight direction-hiding
defense. \arrowcloak{} scales each vector, adds a
vector-specific multiple of one shared mask direction, and permutes the result.
The additive component generally changes the exposed direction, disrupting the
signal used by \arrowmatch{}, while compact secret state lets the TEE correct
the GPU output. \arrowcloak{} further presents an  Learning with Errors (LWE)-based security argument
for the computational hardness of weight recovery. The authors report that it
effectively mitigates \arrowmatch{}, making obfuscated directions nearly as
dissimilar as random vectors.

In this paper, we revisit \arrowcloak{} along two complementary dimensions: whether its proposed LWE formulation actually establishes computational hardness, and whether its direction-hiding transformation eliminates exploitable structure in the released weights. 
We find weaknesses on both fronts, showing that \arrowcloak{} does not provide its intended model confidentiality. 
We first reassess \arrowcloak{}'s proposed connection to LWE. Rewriting an \arrowcloak{} weight relation in an LWE-like form does not transfer LWE hardness: the proposed mapping does not constitute a hardness-preserving reduction from LWE to \arrowcloak{} recovery, its quantized arithmetic fails to preserve the claimed modular relation, and its induced joint distribution over matrices, secrets, and errors does not match standard LWE~\cite{regev2009lwe,peikert2016decade}. These gaps invalidate the stated hardness argument, but do not by themselves imply that \arrowcloak{} can be efficiently broken. We therefore turn to the concrete transformation and ask whether its direction-hiding mechanism nevertheless leaves recoverable structure. We find that it does. Although \arrowcloak{} changes individual vector directions, its masking terms are not independent: they lie in one shared direction in the original construction, or in a shared low-dimensional subspace under the rank-$r$ extension we analyze. Under the conditions we derive, this shared structure can be estimated from the jointly released vectors and projected away, restoring enough of their relationship to the public initialization to recover the hidden permutation.

Based on this insight, we present \textbf{\attackname{}}, an efficient, query-free recovery attack against \arrowcloak{}-type protection. Given the public initialization checkpoint, public architecture metadata, and the obfuscated weights exposed to the GPU, \attackname{} uses singular value decomposition (SVD) to estimate and remove the shared masking subspace, solves a global one-to-one assignment to recover the hidden permutation, and fits the remaining coefficients to reconstruct the victim weights. The attack requires no plaintext victim weights, transformation secrets, victim-model queries, or fine-tuning data. We further derive sufficient spectral-separation conditions for recovering the original rank-one mask direction and its removal projector, and characterize the corresponding failure conditions for rank-$r$ masking. Across six model--task pairs, \attackname{} achieves $99.92$--$100\%$ permutation recovery. On ViT-B/16, BERT-base, and GPT-2 classification workloads, the reconstructed models agree with their victims on $94.39$--$99.54\%$ of test inputs, with task accuracy differing by at most $1.59$ percentage points. The recovered segmentation model achieves $98.35\%$ output agreement with its victim, and we additionally evaluate two diffusion workloads.

Viewed together, \textit{Game of Arrows} and our work expose complementary dimensions of leakage in TSLP's weight protection. \arrowmatch{} exploits geometry preserved within individual vectors under scaling and permutation, whereas \attackname{} exploits structure shared across vectors after \arrowcloak{} conceals this local geometry. This progression shows that preventing per-vector matching is insufficient when the jointly released weights still expose recoverable matrix-level structure. Confidential offloading mechanisms must therefore account for both local geometric leakage and global structure across released weights.

This observation raises a natural design question: can \arrowcloak{} mitigate cross-vector leakage by increasing the mask rank while preserving its lightweight trusted-side design? We study this tradeoff on BERT-base/SST-2 by increasing the mask rank from $64$ to $512$ and applying the complete attack with automatic rank estimation. Higher-rank masking progressively weakens functional recovery, but does not eliminate the underlying structural leakage: permutation recovery remains $100\%$ through $r=256$ and $99.98\%$ at $r=512$, while victim agreement falls from $95.53\%$ to $48.05\%$. These results show that higher-rank masking weakens the attack, but requires more trusted-side state and correction work without fully eliminating the underlying structural leakage.

To the best of our knowledge, the only other public critique of \arrowcloak{} is a concurrent work by Chiang et al., released as the \textit{MOSAIC} arXiv preprint on July 31, 2026~\cite{chiang2026mosaic}. Its Appendix~F identifies a flaw in \arrowcloak{}'s LWE-based security argument, but does not evaluate whether the scheme is practically recoverable. Our work audits the argument more broadly, covering its reduction direction, quantized arithmetic, and joint instance distribution. We also present \attackname{}, the first end-to-end, query-free attack to recover a functional model from \arrowcloak{}-protected weights.

This paper makes the following contributions:
\begin{itemize}
    \item We audit \arrowcloak{}'s LWE-based security argument and show that it does not transfer LWE hardness, identifying gaps in the reduction direction, quantized arithmetic, and joint instance distribution.

    \item We uncover shared low-rank masking structure in \arrowcloak{} and develop \attackname{}, a polynomial-time, query-free recovery attack using only public information and the exposed obfuscated weights. We also derive sufficient conditions for rank-one mask recovery.

    \item We evaluate \attackname{} across six model--task pairs, achieving $99.92$--$100\%$ permutation recovery and recovering victim functionality across classification, segmentation, and diffusion workloads.

    \item We study higher-rank masking as a mitigation and show that it weakens functional recovery but does not eliminate subspace or permutation recovery, while increasing trusted-side cost.
\end{itemize}

\section{TEE-Shielded LLM Partitioning}
\label{sec:background}

This section introduces the deployment setting and protection mechanism
analyzed in this paper. We first describe partitioned inference between a TEE
and an untrusted GPU, formalize \arrowcloak{}'s weight transformation and
efficient output recovery, and summarize its claimed security argument. 
Table~\ref{tab:notation} summarizes the notation conventions used throughout
the paper; problem-specific symbols are defined when they are first introduced.

\begin{figure*}[t]
    \centering
    \includegraphics[width=0.95\textwidth]{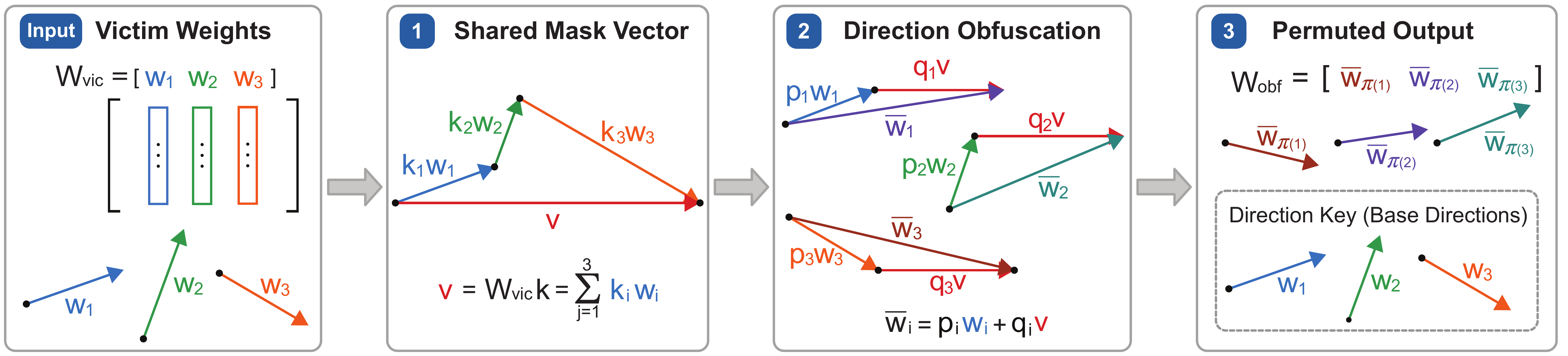}
    \caption{\arrowcloak{} counters \arrowmatch{} by adding a shared mask
    vector to rotate victim-weight vectors before permutation.}
    \label{fig:arrowcloak}
\end{figure*}

\begin{table}[!h]
\caption{Notation used throughout the paper.}
\label{tab:notation}
\centering
\footnotesize
\setlength{\tabcolsep}{3pt}
\renewcommand{\arraystretch}{1.06}
\begin{tabularx}{\columnwidth}{@{}>{\raggedright\arraybackslash}p{0.34\columnwidth}X@{}}
\toprule
\textbf{Symbol} & \textbf{Meaning} \\
\midrule
$A,W,H,\Pi$ & Matrices \\
$\bm{x},\bm{v},\bm{q}$ & Column vectors \\
$p,q,i$ & Scalars or indices \\
$\colof{A}{i},\rowof{A}{i},A\tp,A\inv$
& Column $i$, row $i$, transpose, and inverse of $A$ \\
$\Id,\ones,\stdbasis{i},\diag(\cdot),\round{\cdot}$
& Identity, ones, basis vector, diagonalization, and nearest-integer rounding \\
$\opnorm{\cdot},\frobnorm{\cdot},\cossim(\bm{x},\bm{y})$
& Euclidean/spectral norm, Frobenius norm, and cosine similarity \\
\bottomrule
\end{tabularx}
\end{table}

\subsection{Partitioned Inference with an Untrusted GPU}

A trusted execution environment (TEE) isolates selected code and data from the
rest of the system, including privileged host software~\cite{sabt2015trusted}.  This isolation can
protect proprietary model state on a user-controlled device.  Executing an
entire model inside a CPU TEE is costly because CPU TEEs lack GPU-class GEMM
throughput and incur additional trusted-memory and data-movement overhead.
Prior work therefore partitions machine-learning
workloads between a CPU TEE and a faster, untrusted accelerator, typically
retaining sensitive state and lightweight operations inside the TEE while
outsourcing expensive linear algebra~\cite{slalom,goten,threelegrace}.

TEE-Shielded LLM Partitioning (TSLP)~\cite{shen2022soter,sun2023shadownet,li2024translinkguard,xu2024tempo,
yang2024kvshield,sun2025tsqp,li2025coreguard} applies this design to on-device model
inference.  For an offloaded linear layer, the TEE
transforms the private weights and exposes only the transformed weights to the
GPU.  The GPU performs the dominant matrix multiplication, after which the TEE
recovers the output of the original layer.  Other operations that are not
suited to outsourcing remain inside the TEE.  Unlike systems that extend
trusted execution directly to the accelerator~\cite{graviton,telekine}, this
setting treats the GPU and its memory as observable by the device owner.

The weight transformation must satisfy competing security and efficiency
requirements.  It should prevent an observer of the GPU from recovering useful
private weights, while its preparation and output-recovery costs must remain
substantially below the outsourced matrix multiplication.  Consequently,
lightweight TSLP schemes commonly use vector-wise scaling, permutation, and
structured masking rather than dense secret transformations.

\subsection{From \textnormal{\arrowmatch{}} to \arrowcloak{}}
\label{sec:bg-arrow}

Among recent proposals for lightweight weight protection in TSLP,
\textit{Game of Arrows} introduces \arrowcloak{}
~\cite{gameofarrows2025}, a scheme intended to conceal
weight-vector directions while retaining inexpensive TEE-side recovery. The
scheme is developed in response to \arrowmatch{}, an attack that exposes
direction leakage in earlier lightweight obfuscation mechanisms.

Specifically, let $W_{\mathrm{pre}}$ denote the weights of a public pre-trained model and
$W_{\mathrm{vic}}$ the corresponding weights after private fine-tuning.
The attacker knows $W_{\mathrm{pre}}$ and observes the transformed weight
vectors derived from $W_{\mathrm{vic}}$ and exposed to the untrusted GPU.
\arrowmatch{} observes that corresponding weight vectors in
$W_{\mathrm{pre}}$ and $W_{\mathrm{vic}}$ often remain directionally similar. Because scaling preserves vector
direction and permutation only reorders the vectors, neither operation removes
this similarity.  An attacker can therefore compare exposed vectors with
those in $W_{\mathrm{pre}}$ and use their cosine similarities to infer the
hidden matching.  It can then reverse the remaining lightweight
transformations and construct a surrogate model~\cite{gameofarrows2025}.

\arrowcloak{} targets precisely the invariant exploited by \arrowmatch{}:
before permuting the weight vectors, it rotates each one away from its original
direction by adding a mask.  Figure~\ref{fig:arrowcloak} illustrates this
input and a three-step transformation, which we formalize below:

\noindent\textbf{Input: Victim Weights.}
Write
$W_{\mathrm{vic}}=[\bm{w}_1\;\cdots\;\bm{w}_n]
\in\mathbb{R}^{d\times n}$, where each arrow in the input panel depicts the
direction of one victim-weight column $\bm{w}_i$.

\noindent\textbf{Step~\ding{182}: Shared Mask Vector.}
\arrowcloak{} samples
$\bm{k}=(k_1,\ldots,k_n)\tp$ and forms the shared mask vector
\begin{equation}
  \bm{v}
  =W_{\mathrm{vic}}\bm{k}
  =\sum_{j=1}^{n} k_j\bm{w}_j.
  \label{eq:background-arrowcloak-mask}
\end{equation}
The red vector in step~\ding{182} is this shared mask vector, later added to
each column with a column-specific coefficient. Following \arrowcloak{}'s scheme description, each coefficient
$k_i$ takes values in $\mathbb{Z}_u$, where $u$ is a large prime.

\noindent\textbf{Step~\ding{183}: Direction Obfuscation.}
For each column, \arrowcloak{} samples a positive scale $p_i$ and a mask
coefficient $q_i$, then computes
\begin{equation}
  \overline{\bm{w}}_i
  =p_i\bm{w}_i+q_i\bm{v}.
  \label{eq:background-arrowcloak-column}
\end{equation}

The scaling term $p_i\bm{w}_i$ remains collinear with $\bm{w}_i$; adding $q_i\bm{v}$ changes the direction of the exposed vector, thereby removing the directional invariant exploited by \arrowmatch{}.

\noindent\textbf{Step~\ding{184}: Permuted Output.}
Define $D_1=\diag(p_1,\ldots,p_n)$ and
$\bm{q}=(q_1,\ldots,q_n)\tp$.  Stacking the transformed columns gives
$\overline W=W_{\mathrm{vic}}D_1+\bm{v}\bm{q}\tp
=W_{\mathrm{vic}}(D_1+\bm{k}\bm{q}\tp)$.
\arrowcloak{} then samples a permutation matrix $\Pi$ and exposes
\begin{equation}
  W_{\mathrm{obf}}
  =\overline W\Pi
  =W_{\mathrm{vic}}H\Pi,
  \qquad
  H=D_1+\bm{k}\bm{q}\tp.
  \label{eq:background-arrowcloak-transform}
\end{equation}
Here, $\Pi$ only reorders the rotated columns.

\noindent\textbf{Efficient recovery.}
For an input matrix $X$, the GPU returns
$Y_{\mathrm{obf}}=XW_{\mathrm{obf}}$.  The TEE knows $D_1$, $\bm{q}$,
$\Pi$, and $\bm{v}$ and recovers the private layer output as
\begin{equation}
  Y_{\mathrm{vic}}
  =Y_{\mathrm{obf}}\Pi\inv D_1\inv
   -(X\bm{v})\bm{q}\tp D_1\inv.
  \label{eq:background-arrowcloak-recovery}
\end{equation}
The only multiplication involving the private input and mask is the
matrix--vector product $X\bm{v}$; the remaining operations are scaling,
replication, and permutation. Accordingly, \arrowcloak{} characterizes its
online TEE cost as linear in the relevant vector dimensions, avoiding a dense
matrix--matrix multiplication.
For our later analysis, we also consider a natural rank-$r$ extension that
replaces the rank-one term $\bm{k}\bm{q}\tp$ with
$KR\tp$, thereby introducing $r$ shared mask directions.

\noindent\textbf{Claimed security argument.}
Beyond disrupting the directional similarity exploited by \arrowmatch{},
\arrowcloak{} further motivates its security through a claimed connection to
matrix Learning with Errors (LWE). It argues that, after quantization,
recovering the hidden transformation from the public and obfuscated weights
constitutes an LWE problem and is therefore computationally hard. In Section~\ref{sec:lwe-audit}, we further formalize and audit
this claim.

\section{Reassessing the LWE-Based Security Claim}
\label{sec:lwe-audit}

In this section, we formalize and reassess \arrowcloak{}'s LWE-based security. We first
review standard LWE and examine the reduction required to transfer its hardness
to \arrowcloak{} recovery. We then reconstruct \arrowcloak{}'s claimed mapping
and examine the problems in its reduction direction, arithmetic, instance distributions, and
parameter requirements.

\subsection{LWE and the Required Security Argument}
\label{sec:lwe-required-argument}

Fix a dimension $\lwedim$, a modulus $\lwemod\ge 2$, and an error
distribution $\errdist$ over $\Int$.  An LWE sample for a secret
$\bm{s}\in\Zmod{\lwemod}^{\lwedim}$ is obtained by drawing
$\bm{a}\samples\Unif(\Zmod{\lwemod}^{\lwedim})$ and $e\samples\errdist$
independently and releasing
\begin{equation}
\label{eq:lwe-sample}
(\bm{a},\,b),\qquad b=\ip{\bm{a}}{\bm{s}}+e \bmod \lwemod .
\end{equation}
Stacking $\lwesamp$ independent samples gives
$A\samples\Unif(\Zmod{\lwemod}^{\lwesamp\times\lwedim})$ and
$\bm{b}=A\bm{s}+\bm{e}\bmod\lwemod$ with $\bm{e}\samples\errdist^{\lwesamp}$. The
search-LWE problem asks to recover $\bm{s}$ from $(A,\bm{b})$~\cite{regev2009lwe,peikert2016decade}.

LWE is believed to be computationally hard for several choices of secret and
error distributions and underlies many lattice-based cryptographic
constructions~\cite{regev2009lwe,peikert2016decade}.
By contrast, Bootle et al.~\cite{bootle2018ilwe} show that the integer variant obtained by omitting
modular reduction (ILWE) is efficiently solvable under mild conditions on the
coefficient and error distributions and their relative variances, using
least-squares regression.

To base the hardness of a recovery problem $\mathcal{P}$ on LWE, one must show that an efficient solver for $\mathcal{P}$ would imply an efficient solver for LWE. 
For a direct recovery reduction, this can be done by mapping an LWE challenge to a valid instance of $\mathcal{P}$ such that solving the latter enables solving the original LWE challenge. 
Section~III-B examines whether \arrowcloak{} provides such a reduction.

\subsection{Auditing \arrowcloak{}'s LWE Mapping}
\arrowcloak{} starts from the real-valued identity
\begin{equation}
  W_{\mathrm{pre}} = W_{\mathrm{obf}}\widetilde H^{-1}
  + (W_{\mathrm{pre}}-W_{\mathrm{vic}}),
  \label{eq:arrow-lwe-map}
\end{equation}
and associates $W_{\mathrm{pre}}$ with the LWE observation,
$W_{\mathrm{obf}}$ with the public coefficient matrix,
$\widetilde H^{-1}$ with the secret, and
$E=W_{\mathrm{pre}}-W_{\mathrm{vic}}$ with the error. It
then rounds these real-valued matrices separately into a finite field and
argues that Eq.~(\ref{eq:arrow-lwe-map}) reduces \arrowcloak{} recovery to LWE.

To justify treating
$E=W_{\mathrm{pre}}-W_{\mathrm{vic}}$ as the LWE error term,
\arrowcloak{} examines its empirical distribution using weight histograms,
Kolmogorov--Smirnov tests, and its measured standard deviation
~\cite{gameofarrows2025}. These checks concern only whether the fine-tuning
residual resembles the proposed error distribution. Together, the algebraic correspondence,
quantization step, and empirical residual analysis form the basis of
\arrowcloak{}'s LWE-based security argument.

We next identify the gaps that prevent this argument from establishing the
claimed hardness.

\noindent\textbf{Reduction in the wrong direction.}
As discussed in Section~\ref{sec:lwe-required-argument}, transferring LWE
hardness requires a reduction from LWE to \arrowcloak{} recovery. Instead,
\arrowcloak{} starts from its own weight-transformation relation and rewrites it in LWE-like notation.
It provides no reduction showing how an efficient adversary that breaks \arrowcloak{} can be transformed into an algorithm that breaks the corresponding LWE hardness assumption, and therefore does not establish \arrowcloak{} recovery hardness from LWE hardness.

The absence of this reduction is already sufficient to invalidate the claimed
hardness transfer. Even with a reduction in the required direction, its output
would still need to be a valid LWE instance. The proposed mapping fails both
arithmetically and distributionally.

\noindent\textbf{Incorrect application of quantization.}
In \arrowcloak{}'s proposed mapping, the integer matrices in the claimed
LWE instance are obtained from Eq.~(\ref{eq:arrow-lwe-map}) using the
common scale $\Delta=2^m$:
\[
\begin{aligned}
A^\top &= \lfloor\Delta W_{\mathrm{obf}}\rceil,
&\qquad
S &= \lfloor\Delta\widetilde H^{-1}\rceil,\\
B &= \lfloor\Delta W_{\mathrm{pre}}\rceil,
&
E &= \lfloor\Delta(W_{\mathrm{pre}}-W_{\mathrm{vic}})\rceil.
\end{aligned}
\]
Thus, even before accounting for rounding error,
\[
A^\top S
  \approx \Delta^2 W_{\mathrm{obf}}\widetilde H^{-1},
\qquad
B-E
  \approx \Delta W_{\mathrm{obf}}\widetilde H^{-1}.
\]
Hence, $B=A^\top S+E\pmod q$ does not follow from the original
real-valued identity: the product contains an additional factor of
$\Delta$. Reduction modulo $q$ does not remove this mismatch. A valid
fixed-point encoding would need to specify a rescaling operation, such as
multiplication by $\Delta^{-1}$ in $\mathbb Z_q$, and account for the
rounding errors induced by both quantization and multiplication. The
published mapping specifies neither step.

\noindent\textbf{Mismatch with LWE distributions.}
In \arrowcloak{}, the quantized matrices $A$ and $E$ are derived from
$W_{\mathrm{vic}}H\Pi$ and $W_{\mathrm{pre}}-W_{\mathrm{vic}}$, respectively,
and therefore share dependence on the victim weights. The quantized secret $S$
is derived from $\Pi^{-1}H^{-1}$, whose structure is determined by $H$ and
$\Pi$. However, in standard matrix LWE, $A$ should be uniform, $E$ should be
sampled independently from a specified small distribution, and $S$ should
follow the secret distribution of the invoked LWE variant
~\cite{regev2009lwe,peikert2016decade}. Consequently, the published mapping
does not establish that its induced instance follows an LWE distribution.
Moreover, \arrowcloak{}'s empirical checks concern only the marginal
distribution of $E$ and therefore do not establish that the joint distribution
of $(A,S,E)$ matches the invoked LWE variant.

Taken together, these gaps prevent \arrowcloak{}'s security argument from
establishing recovery hardness based on LWE.

\section{\attackname{}: A Structural Recovery Attack}
\label{sec:recovery}
Having shown in Section~\ref{sec:lwe-audit} that \arrowcloak{}'s LWE
mapping does not establish recovery hardness, we now ask whether its concrete
construction can nevertheless be broken efficiently. Inspecting the
transformation reveals that its lightweight masking mechanism reuses a shared
direction across protected columns, leaving a rank-one structure in the exposed
weights. We exploit this structure with \attackname{}, a query-free recovery
attack that uses only the public pre-trained weights $W_{\mathrm{pre}}$ and the
exposed obfuscated weights $W_{\mathrm{obf}}$. The attack estimates and removes
the shared mask direction, recovers the hidden permutation, and reconstructs
the protected weights. We present the rank-one attack first, then extend it to
rank-$r$ masking and analyze its computational complexity.

\subsection{Threat Model and Attack Overview}
\label{sec:arrowcloak-structure}

\noindent\textbf{Threat model.}
Following the identical threat model in \arrowcloak{}~\cite{gameofarrows2025}, we assume an on-device deployment scenario and an honest-but-curious adversary model. The model user follows the
prescribed inference protocol but controls the host and untrusted GPU, and thus
observes the model architecture, tensor metadata, and obfuscated weights
$W_{\mathrm{obf}}$. Our main setting assumes access to the exact public
checkpoint $W_{\mathrm{pre}}$ from which the private victim weights
$W_{\mathrm{vic}}$ were fine-tuned. The adversary cannot access
$W_{\mathrm{vic}}$, transformation secrets, private training state, or
TEE-protected data, and does not exploit TEE vulnerabilities, side channels,
or protocol deviations. 
The adversary aims to construct a surrogate model $\widehat M$ that matches
$M_{\mathrm{vic}}$ in behavior and task performance. Exact recovery of every
victim weight or transformation secret is unnecessary.

\begin{figure*}[t]
    \centering
    \subfloat[Shared-direction leakage. The additive masks applied to
    individual columns all lie in one shared subspace.]{%
        \includegraphics[width=\textwidth]{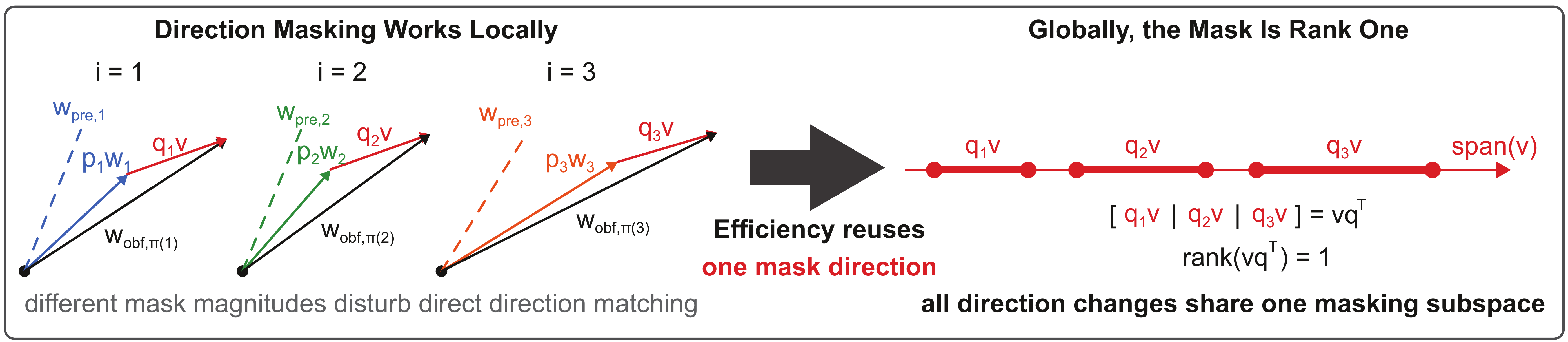}%
        \label{fig:shared-direction-leakage}}
    \par\medskip
    \subfloat[\attackname{} attack workflow. The attack estimates and removes
    the shared mask subspace, recovers the hidden column permutation through
    one-to-one assignment, and fits the remaining coefficients to reconstruct
    the victim weights.]{%
        \includegraphics[width=\textwidth]{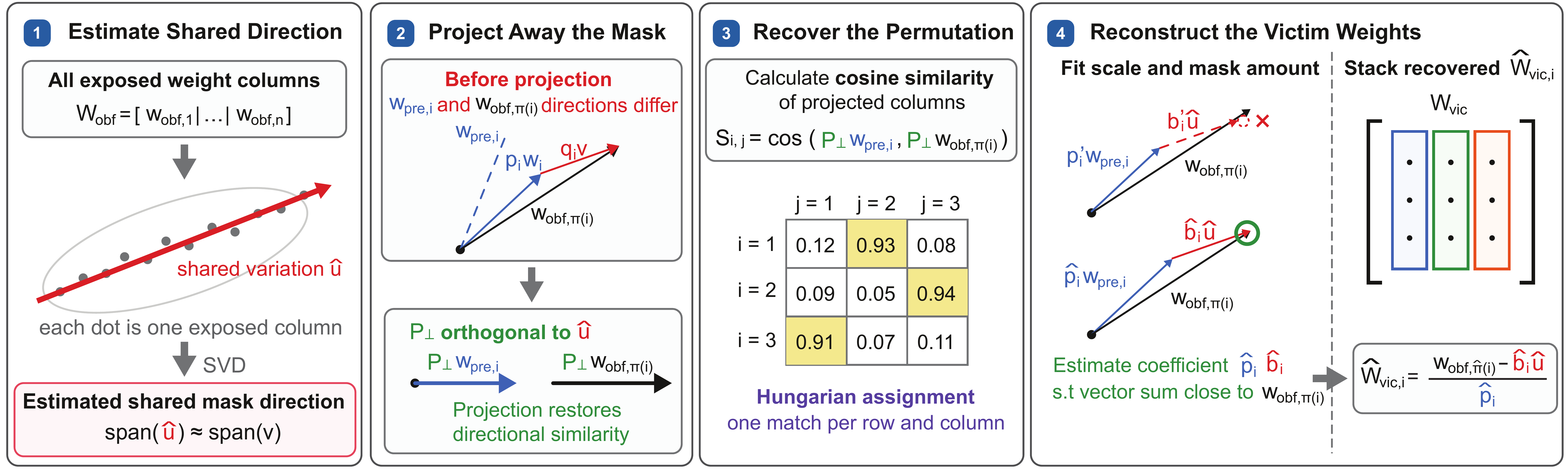}%
        \label{fig:attack-workflow}}
    \caption{Shared-direction leakage and the \attackname{} attack workflow.}
    \label{fig:attack-overview}
\end{figure*}

\noindent\textbf{Inspiration: Shared-direction leakage.} A low-rank mask is introduced in 
\arrowcloak{} to perturb the victim model weight. Recall from Section~\ref{sec:background} that it transforms a
victim weight matrix as
\begin{equation}
  W_{\mathrm{obf}}
  =(W_{\mathrm{vic}}D_1+\bm{v}\bm{q}^{\top})\Pi,
  \label{eq:rank-one-attack-model}
\end{equation}
The additive term $q_i\bm v$ changes the direction of column $i$ before permutation, but all
such terms share the same mask direction $\bm v$. Thus, for nonzero $\bm v$
and $\bm q$,
\begin{equation}
  \operatorname{rank}(\bm{v}\bm{q}^{\top}\Pi)=1,
  \qquad
  \operatorname{col}(\bm{v}\bm{q}^{\top}\Pi)
  =\operatorname{span}(\bm{v}).
  \label{eq:shared-mask-column-space}
\end{equation}
The hidden permutation $\Pi$ reorders the columns but does not conceal the shared
direction of the additive masks.

The transformation introduces an observable low-rank structure in $W_{\mathrm{obf}}$, which may be used to remove the effect of the mask direction $\bm v$.
Specifically, $\bm v$ could be estimated by SVD, in which the leading left singular direction of centered
$W_{\mathrm{obf}}$ estimates $\operatorname{span}(\bm v)$. 

Projecting out
this direction suppresses the shared mask. Because
$W_{\mathrm{vic}}=W_{\mathrm{pre}}-E$, projected public and obfuscated columns
linked by the hidden permutation remain approximately collinear when the
projected fine-tuning residual is small. Therefore, we can use
one-to-one column matching to find the hidden permutation and recover the victim model weight.

\noindent\textbf{Attack overview.}
Figure~\ref{fig:attack-overview} summarizes how \attackname{} turns this
observation into a four-stage recovery attack. First, it estimates the shared
mask direction from $W_{\mathrm{obf}}$. Second, it projects this direction out
of both the public and obfuscated weights. Third, it matches the projected
columns through a global one-to-one assignment to recover the hidden
permutation. Finally, using this permutation, it fits each column's scale and
effective mask coefficient to reconstruct the victim weights.

\subsection{Estimating the Shared Direction}
\label{sec:direction-extraction}

We now show how to estimate the shared mask direction from
$W_{\mathrm{obf}}$. Although the rank-one mask component is mixed with the scaled
victim weights, its direction can be estimated from the leading left singular
vector of the centered exposed matrix.
We first center the exposed columns by subtracting their mean. Define
\begin{equation}
  \bm\mu_{\mathrm{obf}}
  =\frac{1}{n}W_{\mathrm{obf}}\ones,
  \qquad
  \widetilde W_{\mathrm{obf}}
  =W_{\mathrm{obf}}-\bm\mu_{\mathrm{obf}}\ones\tp.
  \label{eq:attack-centered-svd}
\end{equation}
Thus, the columns of $\widetilde W_{\mathrm{obf}}$ have zero mean.
This centering makes the subsequent SVD depend on variation across columns rather than on their overall mean.

Let
\begin{equation}
  \widetilde W_{\mathrm{obf}}=U\Sigma V\tp
  \label{eq:estimated-mask-svd}
\end{equation}
be the SVD of the centered exposed matrix, with
singular values in non-increasing order. For the rank-one construction, the
attacker estimates the mask direction using the leading left singular vector,
\begin{equation}
  \widehat{\bm u}=\colof{U}{1}.
  \label{eq:estimated-mask-direction}
\end{equation}
The sign of $\widehat{\bm u}$ is immaterial because the attack uses only its
span.

\noindent\textbf{Why it works.} 
For compactness in the analysis, we write the same centering operation as
\[
\widetilde W_{\mathrm{obf}}
=W_{\mathrm{obf}}P_{\ones^\perp},
\qquad
P_{\ones^\perp}
=\Id-\frac{1}{n}\ones\ones\tp.
\]

To explain the estimator, define the following latent quantities for analysis:
\begin{equation}
\begin{aligned}
  X&=W_{\mathrm{vic}}D_1,
  &\widetilde X&=XP_{\ones^\perp}, \\
  \widetilde{\bm q}&=P_{\ones^\perp}\bm q,
  &M&=\bm v\widetilde{\bm q}\tp.
\end{aligned}
\end{equation}
Because a permutation preserves the all-ones vector,
$\Pi P_{\ones^\perp}=P_{\ones^\perp}\Pi$. Applying the centering operator to
Eq.~\eqref{eq:rank-one-attack-model} therefore gives
\begin{equation}
\begin{aligned}
  \widetilde W_{\mathrm{obf}}
  &=(X+\bm v\bm q\tp)\Pi P_{\ones^\perp} \\
  &=(XP_{\ones^\perp}+\bm v\widetilde{\bm q}\tp)\Pi \\
  &=(\widetilde X+M)\Pi.
\end{aligned}
  \label{eq:centered-observed-decomposition}
\end{equation}
We first analyze the rank-one mask signal $M$ in isolation. Each column of
$M$ is a scalar multiple of $\bm v$, so $M$ has rank at most one. If
$\bm v\neq\zerovec$ and $\widetilde{\bm q}\neq\zerovec$, at least one column
is nonzero and $M$ has rank one. To identify its leading direction, for
$\bm u=\bm v/\opnorm{\bm v}$, let
$\lambda_{\mathrm{mask}}
=\opnorm{\widetilde{\bm q}}^2\opnorm{\bm v}^2>0$.
Direct multiplication gives
\begin{equation}
\begin{aligned}
  MM\tp\bm u
  &=\opnorm{\widetilde{\bm q}}^2
    \bm v\bm v\tp\frac{\bm v}{\opnorm{\bm v}}
   =\lambda_{\mathrm{mask}}\bm u.
\end{aligned}
  \label{eq:mask-direction-rank-one}
\end{equation}
Thus, $\bm u$ is an eigenvector with positive eigenvalue
$\lambda_{\mathrm{mask}}$. Because $M$ is nonzero and rank one, $MM\tp$ is a
nonzero rank-one positive-semidefinite matrix. Therefore, $\bm u$ spans its
unique leading eigenspace, and the leading unit eigenvector is $\bm u$ up to
sign.

We now return to the observed centered matrix $\widetilde W_{\mathrm{obf}}$.
Since $\Pi$ is a permutation matrix, $\Pi\Pi\tp=\Id$. Therefore,
\begin{equation}
\begin{aligned}
  \widetilde W_{\mathrm{obf}}\widetilde W_{\mathrm{obf}}\tp
  &=(\widetilde X+M)\Pi\Pi\tp(\widetilde X+M)\tp \\
  &=(\widetilde X+M)(\widetilde X+M)\tp.
\end{aligned}
  \label{eq:permutation-invariant-gram}
\end{equation}
The leading left singular vector of $\widetilde W_{\mathrm{obf}}$ is the
leading eigenvector of this Gram matrix.

Relative to the isolated mask signal, the scaled victim component
$\widetilde X$ and the resulting cross terms act as a spectral perturbation.
When that perturbation is small relative to $\lambda_{\mathrm{mask}}$, the
leading left singular vector $\widehat{\bm u}$ approaches $\bm u$ up to sign.
Section~\ref{sec:analysis} formalizes this condition and the resulting
direction-recovery guarantee.

\subsection{Recovering the Hidden Permutation}
\label{sec:permutation-recovery}
We estimate the hidden permutation in two steps. We first show how to derive the projected
column relation under the ideal mask-removal projector. We then use the
estimated projector to construct a one-to-one assignment.

\noindent\textbf{Ideal mask removal.} We remove $\bm v$ mask by projections.
Let
\begin{equation}
  \bm u=\frac{\bm v}{\opnorm{\bm v}},
  \qquad
  P_{\bm v^\perp}=\Id-\bm u\bm u\tp.
  \label{eq:true-mask-projector}
\end{equation}
The matrix $P_{\bm v^\perp}$ is the orthogonal projector onto the hyperplane
$\bm v^{\perp}$ and satisfies $P_{\bm v^\perp}\bm v=0$.
Recall that
\begin{equation}
  W_{\mathrm{obf}}
  =
  \left(
    W_{\mathrm{vic}}D_1
    +\bm v\bm q\tp
  \right)\Pi.
  \label{eq:attack-arrow-public-low-rank}
\end{equation}
Projecting the exposed matrix onto $\bm v^{\perp}$ removes the rank-one mask:
\begin{equation}
\begin{aligned}
  P_{\bm v^\perp}W_{\mathrm{obf}}
  &=
  P_{\bm v^\perp}
  \left(
    W_{\mathrm{vic}}D_1+\bm v\bm q\tp
  \right)\Pi \\
  &=
  \left(P_{\bm v^\perp}W_{\mathrm{vic}}D_1\right)\Pi.
  \label{eq:oracle-projected-obfuscated}
\end{aligned}
\end{equation}

Because $W_{\mathrm{vic}}=W_{\mathrm{pre}}-E$, its $i$th column satisfies
$\colof{{W_{\mathrm{vic}}}}{i} = \colof{{W_{\mathrm{pre}}}}{i}-\colof{E}{i}$.
Recall that $D_1 = \operatorname{diag}(p_1, \dots, p_n)$ with positive scales $p_i > 0$.
Let $\pi(i)$ denote the exposed position assigned to victim column $i$.
Since projection removes the shared mask component, the corresponding projected column satisfies
\begin{equation}
  P_{\bm v^\perp}\colof{{W_{\mathrm{obf}}}}{\pi(i)}
  =
  p_i\left(
    P_{\bm v^\perp}\colof{{W_{\mathrm{pre}}}}{i}
    -
    P_{\bm v^\perp}\colof{E}{i}
  \right).
  \label{eq:projected-corresponding-column}
\end{equation}
Because $p_i>0$, when 
$\opnorm{P_{\bm v^\perp}\colof{E}{i}}
\ll\opnorm{P_{\bm v^\perp}\colof{{W_{\mathrm{pre}}}}{i}}$, the correct
projected pair is
nearly positively aligned and its oracle cosine similarity is close to one. This
motivates matching the projected columns by cosine similarity.

\noindent\textbf{Estimated projection and global assignment.}
In practice, the true direction $\bm u$ is unavailable. Using the estimate
$\widehat{\bm u}$ from Eq.~\eqref{eq:estimated-mask-direction}, the attacker
constructs
\begin{equation}
  P_{\widehat{\bm u}^\perp}
  =\Id-\widehat{\bm u}\widehat{\bm u}\tp.
  \label{eq:estimated-mask-projector}
\end{equation}

Let
\begin{equation}
  S_{ij}
  =
  \cossim\!\left(
    P_{\widehat{\bm u}^\perp}\colof{{W_{\mathrm{pre}}}}{i},
    P_{\widehat{\bm u}^\perp}\colof{{W_{\mathrm{obf}}}}{j}
  \right),
  \label{eq:projected-cosine-score}
\end{equation}
be the projected cosine similarity between public column $i$ and exposed
column $j$. Since $\pi$ is bijective, every public column must be assigned to
exactly one exposed column, and every exposed column must be used exactly once.
We therefore solve
\begin{equation}
  \widehat\pi
  =
  \arg\max_{\sigma\in\mathcal S_n}
  \sum_{i=1}^{n}S_{i,\sigma(i)}.
  \label{eq:attack-global-assignment}
\end{equation}
We solve this one-to-one linear-assignment problem using the Hungarian
algorithm~\cite{kuhn1955hungarian} applied to $-S$. The
Hungarian algorithm enforces the one-to-one constraint globally, yielding a
valid permutation.

Using $P_{\widehat{\bm u}^\perp}$ instead of $P_{\bm v^\perp}$ introduces an
additional score perturbation beyond the fine-tuning residual in
Eq.~\eqref{eq:projected-corresponding-column}.
Section~\ref{sec:projector-recovery-guarantee} bounds the projector error, and
Appendix~\ref{app:matching-guarantee} shows that the assignment exactly
recovers $\pi$ when the projected columns retain nonzero norm and the correct
cosine matches have sufficient margin.

\subsection{Reconstructing the Victim Weights}
Given the recovered permutation and mask direction, we estimate each
column's scale and effective mask coefficient by constrained least squares and
then invert the fitted transformation.

\noindent\textbf{Coefficient estimation.}
For a correctly matched column, such that $\widehat\pi(i)=\pi(i)$, the exposed
column satisfies
\begin{equation}
\begin{aligned}
  \colof{{W_{\mathrm{obf}}}}{\widehat\pi(i)}
  &=p_i\colof{{W_{\mathrm{vic}}}}{i}+q_i\bm v \\
  &=p_i\colof{{W_{\mathrm{pre}}}}{i}
    +q_i\bm v-p_i\colof{E}{i}.
\end{aligned}
  \label{eq:matched-column-decomposition}
\end{equation}
Because $\widehat{\bm u}\approx\pm\bm v/\opnorm{\bm v}$ and the fine-tuning
residual is small, we use the approximation
\begin{equation}
  \colof{{W_{\mathrm{obf}}}}{\widehat\pi(i)}
  \approx
  p_i\colof{{W_{\mathrm{pre}}}}{i}+b_i\widehat{\bm u},
  \label{eq:rank-one-two-component-model}
\end{equation}
where $b_i$ absorbs the mask coefficient, the norm of $\bm v$, and the sign
ambiguity of $\widehat{\bm u}$. We estimate the two coefficients by constrained
least squares:
\begin{equation}
  (\widehat p_i,\widehat b_i)
  =
  \arg\min_{p\in[p_{\min},p_{\max}],\,b\in\mathbb{R}}
  \opnorm{
    \colof{{W_{\mathrm{obf}}}}{\widehat\pi(i)}
    -p\colof{{W_{\mathrm{pre}}}}{i}
    -b\widehat{\bm u}
  }^2.
  \label{eq:rank-one-coefficient-fit}
\end{equation}

The interval $[p_{\min},p_{\max}]$ enforces the scaling range used by
\arrowcloak{}, while $b$ is unconstrained because it absorbs the unknown
mask magnitude and sign.

\noindent\textbf{Weight reconstruction.}
Using the fitted coefficients, we reconstruct the victim column as
\begin{equation}
  \colof{\widehat{W_{\mathrm{vic}}}}{i}
  =\frac{\colof{{W_{\mathrm{obf}}}}{\widehat\pi(i)}
          -\widehat b_i\widehat{\bm u}}
         {\widehat p_i}.
  \label{eq:rank-one-victim-reconstruction}
\end{equation}
Although the public column is used to estimate the two coefficients, the
reconstruction uses the original exposed column. The fine-tuning residual in
Eq.~\eqref{eq:matched-column-decomposition} is therefore retained after mask
removal rather than replaced by zero. The remaining reconstruction error
reflects permutation-recovery errors, scale-estimation error, and residual mask
error after subtracting $\widehat b_i\widehat{\bm u}$.

\subsection{Rank-$r$ Extension}
\label{sec:rank-k-extension}
The preceding subsections focus on the rank-one case, but our attack extends
naturally to a rank-$r$ mask by replacing the single shared direction with an
$r$-dimensional mask subspace.

\noindent\textbf{Rank-$r$ mask structure.}
Following the rank-$r$ construction introduced in
Section~\ref{sec:background}, the additive mask
$(W_{\mathrm{vic}}K)R\tp$ has rank at most $r$ and admits a factorization
$BC\tp$, where $B\in\mathbb{R}^{d\times r}$ has orthonormal columns and
$C\in\mathbb{R}^{n\times r}$ contains the corresponding coefficients. Thus,
\begin{equation}
\begin{aligned}
  \overline W
  &=W_{\mathrm{vic}}D_1+(W_{\mathrm{vic}}K)R\tp \\
  &=W_{\mathrm{vic}}D_1+BC\tp, \\
  W_{\mathrm{obf}}
  &=\overline W\Pi.
\end{aligned}
  \label{eq:rank-k-attack-model}
\end{equation}
Thus, the columns of $B$ form an orthonormal basis for the shared mask
subspace, and the rows of $C$ contain the column-specific mask coefficients.
As in the rank-one case, the permutation only reorders the exposed columns and
does not alter the shared mask subspace.

Let $\widetilde X=W_{\mathrm{vic}}D_1P_{\ones^\perp}$ and
$\widetilde C=P_{\ones^\perp}C$. Applying the same column centering as in
Eq.~\eqref{eq:attack-centered-svd} gives
\begin{equation}
  \widetilde W_{\mathrm{obf}}
  =
  \left(
    \widetilde X+B\widetilde C\tp
  \right)\Pi.
  \label{eq:rank-k-centered-model}
\end{equation}

\noindent\textbf{Automatic rank inference.}
When the mask rank is not available from the scheme configuration, the
attacker estimates it from the exposed spectrum. Let $s_j$ denote the $j$th
singular value in the SVD of $\widetilde W_{\mathrm{obf}}$. The estimator
examines the adjacent singular-value ratios
\begin{equation}
g_j = \frac{s_j}{s_{j+1}},
\qquad
\widehat{r} = \arg\max_j g_j .
\label{eq:spectral-ratio}
\end{equation}
Let \(g_{(1)}\) and \(g_{(2)}\) denote the largest and second-largest
ratios, respectively. The estimator accepts \(\widehat{r}\) only if
\begin{equation}
g_{(1)} \geq \tau_{\mathrm{gap}}
\qquad\text{and}\qquad
\frac{g_{(1)}}{g_{(2)}} \geq \tau_{\mathrm{sep}},
\label{eq:spectral-acceptance}
\end{equation}
where \(\tau_{\mathrm{gap}}\) controls the minimum break magnitude and
\(\tau_{\mathrm{sep}}\) controls its separation from competing breaks.

\noindent\textbf{Mask-subspace estimation and removal.}
Instead of using the single estimated direction $\widehat{\bm u}$, the
attacker uses the SVD in Eq.~\eqref{eq:estimated-mask-svd} to define
\begin{equation}
  \widehat B
  =
  \left[
    \colof{U}{1}\mid\cdots\mid\colof{U}{\widehat r}
  \right]
  \in\mathbb{R}^{d\times\widehat r}.
  \label{eq:rank-k-estimated-basis}
\end{equation}
The columns of $\widehat B$ are orthonormal and span an estimate of
$\operatorname{col}(B)$.
It then replaces the rank-one projector with
\begin{equation}
  P_{\spn(\widehat B)^\perp}
  =
  \Id-\widehat B\widehat B\tp.
  \label{eq:rank-k-estimated-projector}
\end{equation}

\noindent\textbf{Matching and reconstruction.}
Matching proceeds on the projected columns as in
Section~\ref{sec:permutation-recovery}. After matching, the attack replaces
the scalar term
$b\widehat{\bm u}$ with $\widehat B\bm c$ and fits
\begin{equation}
  (\widehat p_i,\widehat{\bm c}_i)
  =
  \arg\min_{\substack{
    p\in[p_{\min},p_{\max}],\\
    \bm c\in\mathbb{R}^{\widehat r}}}
  \opnorm{
    \colof{{W_{\mathrm{obf}}}}{\widehat\pi(i)}
    -p\colof{{W_{\mathrm{pre}}}}{i}
    -\widehat B\bm c
  }^2.
  \label{eq:rank-k-coefficient-fit}
\end{equation}
The reconstructed victim column is
\begin{equation}
  \colof{\widehat{W_{\mathrm{vic}}}}{i}
  =
  \frac{
    \colof{{W_{\mathrm{obf}}}}{\widehat\pi(i)}
    -\widehat B\widehat{\bm c}_i
  }{\widehat p_i}.
  \label{eq:rank-k-victim-reconstruction}
\end{equation}
Thus, the matching and reconstruction logic is unchanged, but successful
recovery additionally requires an accurate rank and mask-subspace estimate.

\subsection{Computational Complexity}

For a $d$-by-$n$ weight matrix, the attack has three dominant computational
costs: dense SVD, pairwise projected cosine-similarity construction, and
one-to-one assignment. Computing the dense SVD of the centered exposed matrix
takes
\begin{equation}
  O\left(dn\min\{d,n\}\right)
\end{equation}
time. Constructing the projected cosine-similarity matrix requires
$O(n^2d)$ time, and solving the resulting one-to-one assignment with the
Hungarian algorithm requires $O(n^3)$ time.

Projection and per-column coefficient fitting require
$O(dn\widehat r)$ and $O(nd\widehat r^2)$ time, respectively. For the small
mask ranks targeted by the scheme, these terms are lower order than the dense
similarity construction and assignment. The overall runtime is therefore
\begin{equation}
  O\left(
    dn\min\{d,n\}
    +n^2d
    +n^3
  \right).
\end{equation}
The overall space complexity is $O(dn+n^2)$ for the input, projected, and
similarity matrices.

\section{Recovery Guarantees and Failure Conditions}
\label{sec:analysis}

A key step in the attack presented in Section~\ref{sec:recovery} is to
estimate the shared mask direction from the exposed weights. This estimate
underpins the subsequent mask-removal, column-matching, and victim-weight
reconstruction stages. In this section, we analyze when and how accurately
the shared direction can be estimated. We first establish sufficient
conditions under which the rank-one mask signal dominates the residual
spectrum, and then translate this spectral separation into bounds on the
estimated direction and the resulting projection operator. Finally, we
extend the analysis to rank-$r$ masking and discuss the corresponding failure
conditions. Throughout this section, quantities unavailable to the attacker,
including $\widetilde X$, $\bm v$, and $\widetilde{\bm q}$, are introduced only
to state and analyze the guarantees; the attack itself continues to use only
$W_{\mathrm{pre}}$ and $W_{\mathrm{obf}}$.

\subsection{Dominance of the Rank-One Mask Signal}

To characterize when the shared mask direction can be accurately estimated,
we compare the rank-one mask signal with the remaining terms in the Gram
matrix of the centered exposed weights. Recall from
Eq.~\eqref{eq:centered-observed-decomposition} that
\begin{equation}
  \widetilde W_{\mathrm{obf}}
  =
  (\widetilde X+\bm v\widetilde{\bm q}\tp)\Pi.
\end{equation}
Because $\Pi\Pi\tp=\Id$,
\begin{equation}
\begin{aligned}
  G
  &:={}
  \widetilde W_{\mathrm{obf}}
  \widetilde W_{\mathrm{obf}}\tp \\
  &={}
  \lambda_{\mathrm{mask}}\bm u\bm u\tp+R,
  \label{eq:analysis-gram}
\end{aligned}
\end{equation}
where
\begin{equation}
  \bm u=\frac{\bm v}{\opnorm{\bm v}},
  \qquad
  \lambda_{\mathrm{mask}}
  =\opnorm{\widetilde{\bm q}}^2\opnorm{\bm v}^2,
  \label{eq:analysis-mask-strength}
\end{equation}
and
\begin{equation}
  R
  =
  \widetilde X\widetilde X\tp
  +\widetilde X\widetilde{\bm q}\bm v\tp
  +\bm v\widetilde{\bm q}\tp\widetilde X\tp.
  \label{eq:analysis-residual}
\end{equation}
Let $\delta=\opnorm{R}$.

\begin{center}
\begin{minipage}[t]{1\columnwidth}
\vspace{-8pt}
\begin{mdframed}
\textbf{Proposition (Rank-One Spectral Dominance).}
Suppose the centered mask coefficients have constant-order empirical variance,
$\opnorm{\bm v}^2=\Omega(n)$
(that is, $\bm v$ aggregates a non-vanishing fraction of victim columns
without catastrophic cancellation),
and the scaled victim columns have bounded average squared norm. Then
\[
\begin{aligned}
\lambda_{\mathrm{mask}}&=\Omega(n^2),
&
\frac{\delta}{\lambda_{\mathrm{mask}}}&=O(n^{-1/2}).
\end{aligned}
\]
In particular, a superlinear lower bound on $\opnorm{\bm v}^2$ only
strengthens the same ratio.
\end{mdframed}
\end{minipage}
\end{center}

\begin{IEEEproof}
The first condition gives
\begin{equation}
  \opnorm{\widetilde{\bm q}}^2
  =\sum_{i=1}^{n}(q_i-\bar q)^2
  =\Theta(n).
  \label{eq:analysis-centered-mask-scaling}
\end{equation}
Combined with the hypothesis $\opnorm{\bm v}^2=\Omega(n)$ and
Eq.~\eqref{eq:analysis-mask-strength},
\begin{equation}
  \lambda_{\mathrm{mask}}
  =\opnorm{\widetilde{\bm q}}^2\opnorm{\bm v}^2
  =\Omega(n^2).
  \label{eq:analysis-mask-scaling}
\end{equation}

For the scaled victim component,
\begin{equation}
  \frobnorm{X}^2
  =\sum_{i=1}^{n}\opnorm{p_i\bm w_i}^2
  =O(n).
  \label{eq:analysis-scaled-victim-energy}
\end{equation}
Since column centering is an orthogonal projection,
\begin{equation}
\begin{aligned}
  \opnorm{\widetilde X}
  &\leq\frobnorm{\widetilde X}
  \leq\frobnorm{X}
  =O(\sqrt n).
\end{aligned}
  \label{eq:analysis-structural-scaling}
\end{equation}
By submultiplicativity,
\begin{equation}
\begin{aligned}
  \delta
  &\leq
  \opnorm{\widetilde X}^2
  +2\opnorm{\widetilde X}
    \opnorm{\widetilde{\bm q}}\opnorm{\bm v}.
\end{aligned}
  \label{eq:analysis-residual-scaling}
\end{equation}
Dividing by $\lambda_{\mathrm{mask}}=\opnorm{\widetilde{\bm q}}^2\opnorm{\bm v}^2$
yields
\begin{equation}
\begin{aligned}
  \frac{\delta}{\lambda_{\mathrm{mask}}}
  &\leq
  \frac{\opnorm{\widetilde X}^2}
       {\opnorm{\widetilde{\bm q}}^2\opnorm{\bm v}^2}
  +
  \frac{2\opnorm{\widetilde X}}
       {\opnorm{\widetilde{\bm q}}\opnorm{\bm v}}.
\end{aligned}
\end{equation}
The first term is $O(n)/(\Theta(n)\cdot\Omega(n))=O(n^{-1})$.
The second term is $O(\sqrt n)/(\Theta(\sqrt n)\cdot\Omega(\sqrt n))=O(n^{-1/2})$.
Hence
\begin{equation}
  \frac{\delta}{\lambda_{\mathrm{mask}}}
  =O(n^{-1/2}).
  \label{eq:analysis-spectral-ratio}
\end{equation}
\end{IEEEproof}

The $O(n^{-1/2})$ rate predicts stronger spectral separation for wider
protected matrices. Because the bound is asymptotic and conditional, we
measure $\delta/\lambda_{\mathrm{mask}}$ directly for each protected layer in
Section~\ref{sec:evaluation}.

\subsection{Recovery of the Mask Direction}

The preceding proposition bounds the residual relative to the rank-one mask
signal. We now translate this spectral separation into an angular error bound
for the estimated mask direction. Let $\widehat{\bm u}$ be the unit leading
left singular vector of $\widetilde W_{\mathrm{obf}}$, equivalently a leading
eigenvector of $G$. Define its angle from the true mask direction, up to sign,
as
\begin{equation}
  \theta
  =
  \arccos\!\left(\left|\widehat{\bm u}\tp\bm u\right|\right)
  \in[0,\pi/2].
  \label{eq:analysis-direction-angle}
\end{equation}

\begin{center}
\begin{minipage}[t]{1\columnwidth}
\vspace{-8pt}
\begin{mdframed}
\textbf{Proposition (Mask-Direction Recovery).}
Under the decomposition in Eq.~\eqref{eq:analysis-gram}, the angular error of
the leading left singular direction satisfies
\[
  \sin^2\theta
  \leq
  \frac{2\delta}{\lambda_{\mathrm{mask}}}.
\]
\end{mdframed}
\end{minipage}
\end{center}

\begin{IEEEproof}
Because $G=\widetilde W_{\mathrm{obf}}\widetilde W_{\mathrm{obf}}\tp$ is a
symmetric positive-semidefinite Gram matrix, the Rayleigh--Ritz
characterization implies that its unit leading eigenvector $\widehat{\bm u}$
maximizes the Rayleigh quotient. Because $\bm u$ is also a unit vector,
\begin{equation}
  \widehat{\bm u}\tp G\widehat{\bm u}
  =\lambda_{\max}(G)
  \geq
  \bm u\tp G\bm u.
\end{equation}
Substituting Eq.~\eqref{eq:analysis-gram} and using
$\opnorm{\bm u}=1$ gives
\begin{equation}
\begin{aligned}
  &\lambda_{\mathrm{mask}}
  (\widehat{\bm u}\tp\bm u)^2
  +\widehat{\bm u}\tp R\widehat{\bm u} \\
  &\qquad\geq
  \lambda_{\mathrm{mask}}+\bm u\tp R\bm u.
\end{aligned}
\end{equation}
Rearranging and applying
$|\bm x\tp R\bm x|\leq\opnorm{R}=\delta$ for every unit vector $\bm x$
gives
\begin{equation}
\begin{aligned}
  \lambda_{\mathrm{mask}}
  \left(1-(\widehat{\bm u}\tp\bm u)^2\right)
  &\leq
  \widehat{\bm u}\tp R\widehat{\bm u}
  -\bm u\tp R\bm u \\
  &\leq 2\delta,
\end{aligned}
  \label{eq:analysis-direction-derivation}
\end{equation}
Since $\sin^2\theta=1-(\widehat{\bm u}\tp\bm u)^2$,
\begin{equation}
  \sin^2\theta
  \leq
  \frac{2\delta}{\lambda_{\mathrm{mask}}}.
  \label{eq:analysis-direction-bound}
\end{equation}
\end{IEEEproof}

Combining this bound with Eq.~\eqref{eq:analysis-spectral-ratio} gives
$\sin\theta=O(n^{-1/4})$. Thus, under the stated scaling conditions, wider
protected matrices admit a tighter direction-recovery bound.

\subsection{Recovery of the Mask-Removal Projector}
\label{sec:projector-recovery-guarantee}

The attack uses the estimated mask direction through its orthogonal-complement
projector. We therefore translate the angular error in $\widehat{\bm u}$ into
an operator-norm error for the resulting mask-removal projection. Following
Section~\ref{sec:permutation-recovery}, define
\begin{equation}
  P_{\bm v^\perp}=\Id-\bm u\bm u\tp,
  \qquad
  P_{\widehat{\bm u}^\perp}
  =\Id-\widehat{\bm u}\widehat{\bm u}\tp.
  \label{eq:analysis-projectors}
\end{equation}

\begin{center}
\begin{minipage}[t]{1\columnwidth}
\vspace{-8pt}
\begin{mdframed}
\textbf{Proposition (Mask-Removal Projector Recovery).}
The true and estimated mask-removal projectors satisfy
\[
\begin{aligned}
  \opnorm{P_{\widehat{\bm u}^\perp}-P_{\bm v^\perp}}
  &=\sin\theta\\
  &\leq\sqrt{\frac{2\delta}{\lambda_{\mathrm{mask}}}}.
\end{aligned}
\]
\end{mdframed}
\end{minipage}
\end{center}

\begin{IEEEproof}
Because $P_{\widehat{\bm u}^\perp}$ is invariant to the sign of
$\widehat{\bm u}$, choose the sign so that
$c=\widehat{\bm u}\tp\bm u=\cos\theta\geq0$, and let
$s=\sin\theta=\sqrt{1-c^2}$. If $s=0$, then
$\widehat{\bm u}=\bm u$ under this sign choice, so the two projectors are
identical. Otherwise, define
\begin{equation}
  \bm z
  =\frac{\widehat{\bm u}-c\bm u}{s}.
\end{equation}
Since
\begin{equation}
  \bm u\tp(\widehat{\bm u}-c\bm u)=c-c=0,
  \qquad
  \opnorm{\widehat{\bm u}-c\bm u}^2=1-c^2=s^2,
\end{equation}
$\bm z$ is a unit vector orthogonal to $\bm u$, and
$\widehat{\bm u}=c\bm u+s\bm z$. In the orthonormal basis
$\{\bm u,\bm z\}$,
\begin{equation}
\begin{aligned}
  P_{\widehat{\bm u}^\perp}-P_{\bm v^\perp}
  &=\bm u\bm u\tp-\widehat{\bm u}\widehat{\bm u}\tp\\
  &=
  \begin{bmatrix}
    1 & 0 \\
    0 & 0
  \end{bmatrix}
  -
  \begin{bmatrix}
    c^2 & cs \\
    cs & s^2
  \end{bmatrix}\\
  &=
  \begin{bmatrix}
    s^2 & -cs \\
    -cs & -s^2
  \end{bmatrix}.
\end{aligned}
  \label{eq:analysis-projector-matrix}
\end{equation}
This matrix has eigenvalues $s$ and $-s$ and is zero outside
$\operatorname{span}\{\bm u,\bm z\}$. Hence,
\begin{equation}
\begin{aligned}
  \opnorm{P_{\widehat{\bm u}^\perp}-P_{\bm v^\perp}}
  &=s
  =\sin\theta \\
  &\leq
  \sqrt{\frac{2\delta}{\lambda_{\mathrm{mask}}}}.
  \label{eq:analysis-projector-bound}
\end{aligned}
\end{equation}
where the inequality follows from Eq.~\eqref{eq:analysis-direction-bound}.
\end{IEEEproof}

Consequently, for any weight column $\bm x$,
\begin{equation}
  \opnorm{
    P_{\widehat{\bm u}^\perp}\bm x
    -P_{\bm v^\perp}\bm x
  }
  \leq
  \sqrt{\frac{2\delta}{\lambda_{\mathrm{mask}}}}
  \opnorm{\bm x}.
  \label{eq:analysis-projector-column-error}
\end{equation}
This bound controls the additional column-wise error introduced by the
estimated projector. Exact permutation recovery additionally requires
nonzero projected column norms and a positive matching margin, as formalized
in Appendix~\ref{app:matching-guarantee}.

\subsection{Rank-$r$ Extension and Failure Conditions}
\label{sec:rank-k-guarantee}

For a centered rank-$r$ mask $B\widetilde C^{\top}$, where the columns of
$B$ are orthonormal, the mask contribution to the Gram matrix is
\begin{equation}
  B(\widetilde C^{\top}\widetilde C)B^{\top}.
  \label{eq:analysis-rank-k-signal}
\end{equation}
Its nonzero eigenspace is $\operatorname{span}(B)$. The proof follows the
same steps as the rank-one case: replace the mask direction by the top-$r$
mask subspace and compare its weakest nonzero eigenvalue with the residual
spectrum. When this eigengap dominates the residual, the estimated and true
subspace projectors are close. 

Across both the rank-one and rank-$r$ settings, recovery becomes unreliable
when the mask signal is not sufficiently separated from the residual or when
the projected columns lose their
separation. Rank-$r$ masking introduces additional failure modes: the rank may
be misestimated, the weakest mask direction may be too weak, or $r$ may be so
large that projection removes most of the matching information.

\section{Evaluation}
\label{sec:evaluation}

We evaluate \attackname{} as an end-to-end attack against \arrowcloak{}.
This section focuses on the three research questions that establish the
attack's effectiveness, mechanism, and implications for \arrowcloak{}'s
security premise:

\begin{itemize}
    \item \textbf{RQ1: End-to-End Recovery.}
    Can \attackname{} recover the hidden permutation and reproduce victim
    functionality using only the public pre-trained weights and the
    obfuscated weights exposed outside the TEE?

    \item \textbf{RQ2: Recovery Mechanism.}
    Does the shared mask expose a recoverable low-rank subspace, and is the
    estimated mask-removal projector accurate enough to enable the
    permutation recovery underlying the end-to-end attack in RQ1?

    \item \textbf{RQ3: Error-Distribution Security Premise.}
    Do real fine-tuning residuals satisfy the normality and
    standard-deviation conditions used to support \arrowcloak{}'s claimed LWE
    mapping, and does \attackname{} remain effective when those conditions hold?
\end{itemize}

We also report the attacker-side runtime in this section.
Additionally, we answer \textbf{RQ4: Higher-Rank Masking as
Mitigation} in
Section~\ref{sec:rq4-mitigation}, where we examine how increasing the mask
rank affects structural permutation recovery and functional victim
recovery. We report \textbf{RQ5: Robustness to Attacker Uncertainty} last as
a supplementary evaluation in Appendix~\ref{app:attacker-uncertainty}, since
it concerns how an attacker can infer the mask rank and select a compatible
public reference rather than the core recovery result.

\subsection{Experimental Setup}

\noindent\textbf{Models and workloads.}
We evaluate six model--task pairs: ViT-B/16~\cite{dosovitskiy2020image} on CIFAR-100~\cite{krizhevsky2009learning}, BERT-base~\cite{devlin2019bert} on
SST-2~\cite{socher2013recursive}, GPT-2~\cite{radford2019language} on SST-2, ViT on Kvasir-SEG~\cite{jha2019kvasir}, and SD~2.1~\cite{rombach2022high} LoRA on the Jewelry
and Emoji datasets. We fine-tune ViT-B/16 on CIFAR-100 for 10 epochs
(3,910 optimization steps) with an initial learning rate of $10^{-3}$.
BERT-base and GPT-2 are fine-tuned on SST-2 for three epochs
(6,315 steps) with initial learning rates of $2\times10^{-5}$ and
$10^{-5}$, respectively. The Kvasir-SEG model is trained for 12 epochs
(1,200 steps), using learning rates of $3\times10^{-4}$ for the
segmentation decoder and $3\times10^{-5}$ for the ViT backbone.
For SD~2.1, we use the released Jewelry LoRA trained for 500 steps with
a learning rate of $10^{-4}$ and the released Emoji LoRA checkpoint at
1,000 steps, whose initial learning rate is $10^{-4}$.
For each pair, $W_{\mathrm{pre}}$ denotes the public checkpoint used to
initialize fine-tuning, and $W_{\mathrm{vic}}$ denotes the resulting
victim checkpoint. We apply \arrowcloak{} independently to each protected
linear layer and record the corresponding $W_{\mathrm{obf}}$ exposed to
the GPU.

\noindent\textbf{Defense and attack settings.}
Unless otherwise stated, we instantiate \arrowcloak{}'s original rank-one
construction using the exact coefficient distributions in its
released implementation. The transformation
secrets are resampled independently for each protected layer.

For each protected matrix, \attackname{} receives only
$W_{\mathrm{pre}}$ and $W_{\mathrm{obf}}$. The victim weights and
\arrowcloak{}'s transformation secrets are used only for evaluating recovery
and are never provided to the attack. We keep all attack hyperparameters
fixed across comparable experiments. Rank-$r$ experiments vary only the
defense-side mask rank. 

We run all experiments on an AMD machine, with EPYC 9115 16-core CPU and an NVIDIA RTX 6000 GPU. 

\subsection{RQ1: End-to-End Recovery}

We first evaluate whether the structural recovery achieved by \attackname{}
translates into an end-to-end compromise of real protected models. We apply
the complete attack to six protected-model configurations spanning
classification, segmentation, and diffusion.

We use permutation accuracy as the primary structural endpoint. It measures
the fraction of protected weight vectors assigned to their ground-truth
positions by the final one-to-one assignment. We additionally report mean
true-match rank. For each protected reference vector, we rank all exposed
candidates by their matching scores and record the position of the
ground-truth match. A mean rank close to one indicates that the true
match remains among the highest-scoring candidates even when exact recovery
fails.

\begin{table}[!t]
\caption{Structural recovery across six rank-one protected-model
configurations.}
\label{tab:rq1-structural}
\centering
\begingroup
\setlength{\tabcolsep}{1.8pt}
\renewcommand{\arraystretch}{1.08}
\scriptsize
\begin{tabular*}{\columnwidth}{@{\extracolsep{\fill}}lrr@{}}
\toprule
Configuration & Perm. acc. (\%) $\uparrow$ & Mean true-match rank $\downarrow$ \\
\midrule
ViT-B/16/CIFAR-100                 &  99.95 & 1.27   \\
BERT-base/SST-2                    & 100.00 & 1.00   \\
GPT-2/SST-2                        & 100.00 & 1.00   \\
ViT/Kvasir-SEG                     &  99.92 & 1.31   \\
SD 2.1/Jewelry                     & 100.00 & 1.0034 \\
SD 2.1/Emoji                       & 100.00 & 1.0000 \\
\bottomrule
\end{tabular*}
\endgroup
\end{table}

Table~\ref{tab:rq1-structural} shows that \attackname{}
recovers the hidden permutation almost perfectly across all six
configurations, with accuracy ranging from 99.92\% to 100\%.
Mean true-match rank ranges from 1.00 to 1.31, showing that the ground-truth
counterparts remain near the top of the candidate rankings even in the few
cases where the final assignment is incorrect.

Having established structural recovery, we next examine whether the
reconstructed weights reproduce the victim models' functionality. For the
three classification models, we report victim accuracy, recovered-model
accuracy, and prediction agreement with the victim. For Kvasir-SEG, we report
the victim and recovered models' Dice and IoU against the ground-truth
segmentation masks, together with agreement between their outputs. All three
metrics are reported as percentages, with higher values indicating better
performance or agreement. The two diffusion configurations do not provide a
directly comparable task-performance benchmark. We therefore evaluate only
victim--recovered output similarity, using PSNR and SSIM on generations
produced with matched prompts and random seeds to measure pixel-level fidelity
and structural similarity, respectively.

\begin{figure}[t]
    \centering
    \includegraphics[width=\linewidth]{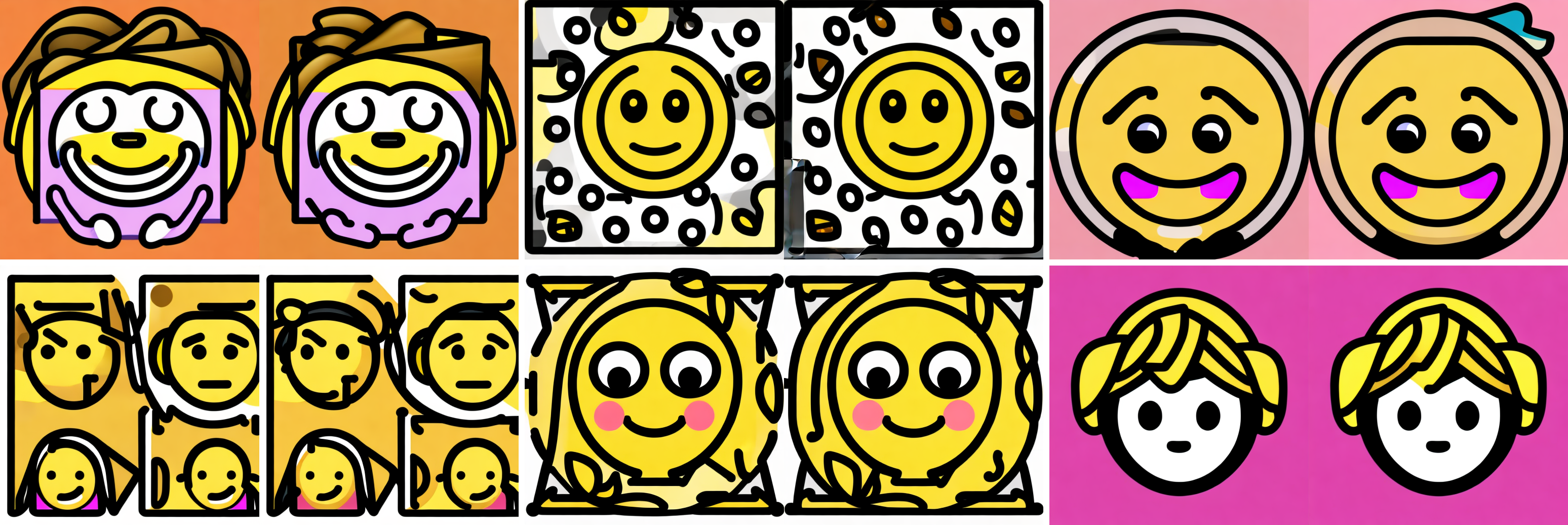}
    \caption{Qualitative victim--recovered comparison on SD 2.1/Emoji under
    matched prompts and random seeds. Within each pair, the victim output is
    shown on the left and the recovered-model output on the right.}
    \label{fig:emoji-qualitative}
\end{figure}

\begin{table}[!t]
\caption{Functional recovery for classification and segmentation. We report
task performance for the victim and recovered models together with their
output agreement.}
\label{tab:rq1-functional}
\centering
\setlength{\tabcolsep}{1.6pt}
\renewcommand{\arraystretch}{1.08}
\scriptsize
\begin{tabular*}{\columnwidth}{@{\extracolsep{\fill}}lrrr@{}}
\toprule
Configuration (metric) & Victim & Rec. & Agree. (\%) \\
\midrule
ViT/CIFAR-100 (Acc.)       & 92.37       & 90.78       & 94.39 \\
BERT/SST-2 (Acc.)          & 91.63       & 91.86       & 99.54 \\
GPT-2/SST-2 (Acc.)         & 90.83       & 91.06       & 97.94 \\
Kvasir-SEG (Dice/IoU)      & 89.61/83.00 & 87.58/80.12 & 98.35 \\
\bottomrule
\end{tabular*}
\end{table}

Table~\ref{tab:rq1-functional} shows that classification
accuracy differs from the victim by at most
1.59 percentage points, while prediction agreement ranges from 94.39\% to
99.54\%. On Kvasir-SEG, the recovered model is within 2.03 percentage points
in Dice and 2.88 percentage points in IoU of the victim, with 98.35\% output
agreement. For diffusion, matched victim--recovered generations yield a PSNR
of 18.5224 and an SSIM of 0.6125 on Jewelry, and a PSNR of 14.4303 and an SSIM
of 0.8530 on Emoji.

Figure~\ref{fig:emoji-qualitative} provides a qualitative view of the Emoji results. 
Across the six examples, the recovered model preserves the main emoji subject. 
Differences remain in fine-grained details such as accessories, background patterns, and exact contour placement. 
Such discrepancies can arise from small residual recovery errors that propagate and may compound over successive denoising steps.
Nevertheless, the strong structural similarity demonstrates the effectiveness of our recovery in reproducing the victim model's main visual characteristics.

\begin{rqresultbox}{1}
\attackname{} achieves near-perfect permutation recovery across all six
configurations, completes the structural-to-functional recovery chain for
classification and segmentation, and quantifies victim--recovered output
similarity for diffusion.
\end{rqresultbox}

\subsection{RQ2: Recovery Mechanism}

We next examine why the end-to-end recovery in RQ1 succeeds.
Section~\ref{sec:analysis} predicts that shared-mask reuse produces a
spectrally dominant rank-one component, causing the residual-to-mask ratio
$\delta/\lambda_{\mathrm{mask}}$ to decrease as $O(n^{-1/2})$ under the
stated non-cancellation and bounded-energy conditions. We test whether this
spectral separation materializes in real protected layers, whether
\attackname{} accurately recovers the resulting mask subspace, and whether
the estimated mask-removal projector is sufficiently accurate to support the
Hungarian matching stage used in RQ1.

We evaluate this recovery chain independently on every protected matrix of
ViT-B/16, BERT-base, and GPT-2. We measure three successive quantities in the
recovery chain.
First, the residual-to-mask spectral ratio
$\delta/\lambda_{\mathrm{mask}}$ measures whether the shared-mask signal
dominates the remaining matrix structure. Second, the principal angle
$\theta(\widehat{\bm u},\bm u)$ measures how accurately the leading singular
direction recovers the true mask span. Third, defining
$P=P_{\bm v^\perp}$ and $\widehat P=P_{\widehat{\bm u}^\perp}$,
$\|\widehat P-P\|_2$ measures the error in the mask-removal operation used
before permutation recovery.

\begin{table}[!t]
\caption{Realized accuracy of shared-mask subspace recovery. Results are
reported as median (maximum) across protected layers.}
\label{tab:analysis-validation}
\centering
\begingroup
\setlength{\tabcolsep}{2.3pt}
\renewcommand{\arraystretch}{1.12}
\scriptsize
\begin{tabular*}{\columnwidth}{@{\extracolsep{\fill}}lccc@{}}
\toprule
Model/task &
\shortstack{Spectral ratio $\downarrow$\\
$\delta/\lambda_{\mathrm{mask}}\;(\times10^{-4})$} &
\shortstack{Direction error $\downarrow$\\
$\theta(\widehat u,u)\;({}^\circ)$} &
\shortstack{Projector error $\downarrow$\\
$\|\widehat P-P\|_2\;(\times10^{-4})$} \\
\midrule
ViT-B/16/CIFAR-100 & 2.88 (5.41) & 0.0157 (0.0283) & 2.75 (4.95) \\
BERT-base/SST-2    & 2.92 (5.17) & 0.0160 (0.0287) & 2.79 (5.01) \\
GPT-2/SST-2        & 2.84 (5.90) & 0.0154 (0.0319) & 2.69 (5.57) \\
\bottomrule
\end{tabular*}
\endgroup
\end{table}

Table~\ref{tab:analysis-validation} shows that the shared-mask component is
sharply separated from the residual structure. The residual-to-mask ratio is
below $6\times10^{-4}$ even in the worst layer. Consistent with the analysis,
the recovered direction deviates from the true mask direction by at most
$0.032^\circ$, and the estimated and oracle mask-removal projectors differ by
at most $5.6\times10^{-4}$ in operator norm.

Appendix~\ref{app:finite-width-diagnostics} separately verifies that the
evaluated layers exhibit neither substantial cancellation in the shared mask
vector nor growth in average victim-column energy, consistent with the
finite-width premises of our analysis.

\begin{rqresultbox}{2}
    Shared-mask reuse creates a stable, spectrally identifiable subspace.
\attackname{} therefore closely approximates the oracle mask-removal
projector, producing projected similarities that support near-perfect global
permutation recovery.
\end{rqresultbox}

\subsection{RQ3: Error-Distribution Security Premise}
To justify treating the fine-tuning residual
$E=W_{\mathrm{pre}}-W_{\mathrm{vic}}$ as the LWE error term, \arrowcloak{}
empirically checks whether $E$ is approximately Gaussian and satisfies a
stated standard-deviation threshold. We evaluate these error-distribution
conditions in two complementary settings. First, for real victim residuals,
we test raw and quantized normality and apply the stated standard-deviation
threshold after quantization. Second, we construct a controlled Gaussian-$E$
victim to test whether satisfying these conditions prevents permutation
recovery.

\noindent\textbf{Real victim residuals.}
\arrowcloak{} assumes that $E$ follows a Gaussian distribution and uses its
quantized form in the claimed LWE mapping. We therefore examine the residual
both before and after quantization. 

\begin{table}[!t]
\caption{Error-distribution checks on real fine-tuning residuals.}
\label{tab:real-e-checks}
\centering
\begingroup
\setlength{\tabcolsep}{1.2pt}
\renewcommand{\arraystretch}{1.08}
\scriptsize
\begin{tabular*}{\columnwidth}{@{\extracolsep{\fill}}lrrrr@{}}
\toprule
Model/task & Layers & \shortstack{Raw\\normal} &
\shortstack{Quant. std\\threshold} & \shortstack{Quant.\\normal} \\
\midrule
ViT-B/16/CIFAR-100 & 48 & 1/48 & 48/48 & 1/48 \\
BERT-base/SST-2 & 73 & 73/73 & 73/73 & 73/73 \\
GPT-2/SST-2 & 48 & 39/48 & 48/48 & 39/48 \\
ViT/Kvasir-SEG & 48 & 6/48 & 48/48 & 5/48 \\
SD 2.1/Jewelry & 128 & 0/128 & 128/128 & 0/128 \\
SD 2.1/Emoji & 128 & 0/128 & 128/128 & 0/128 \\
\bottomrule
\end{tabular*}
\endgroup
\end{table}

Table~\ref{tab:real-e-checks} reports how many protected layers pass each
error-distribution check. Raw normal applies \arrowcloak{}'s normality check
to the real-valued residual. Quant. normal tests the Gaussian-shape requirement  after
quantization, while Quant. std threshold tests whether the quantized residual
satisfies the stated condition $\sigma>2\sqrt{n}$.

All protected layers satisfy the quantized standard-deviation
threshold, but the normality results vary substantially across models. Raw
normality holds in many BERT and GPT-2 layers but few layers in other models. BERT
satisfies both quantized checks in all 73 layers (72 Transformer linear matrices and the SST-2 classification head), and GPT-2 does so in 39 of
48 layers. In contrast, the remaining real victims satisfy quantized
normality in few or no protected layers. Thus, the error-distribution
conditions are not consistently satisfied across real victims.

\noindent\textbf{Controlled Gaussian-$E$ victim.}
We construct a controlled victim from ViT-B/16 on CIFAR-100. For each of
its 48 layers, we sample a Gaussian residual $E$ and set
$W_{\mathrm{vic}}=W_{\mathrm{pre}}-E$. This experiment tests whether the
Gaussian and variance properties of $E$ alone prevent permutation recovery.

All 48 synthetic residuals satisfy raw normality and the stated
standard-deviation threshold, while 42 of 48 satisfy both quantized checks.
Nevertheless, \attackname{} achieves 99.914\% permutation recovery, with a
mean true-match rank of 1.2507. Thus, constructing
the victim with a Gaussian residual does not remove the shared structure
exploited by the attack. RQ1 provides a corresponding real-victim example in
BERT: all 73 protected layers satisfy both quantized conditions, yet
\attackname{} achieves 100\% permutation recovery and 99.54\% prediction
agreement.

\begin{rqresultbox}{3}
The error-distribution conditions invoked by \arrowcloak{} are neither
consistently satisfied by real victims nor sufficient to prevent
recovery by \attackname{} when they are satisfied.
\end{rqresultbox}

\subsection{Attack Cost}

We measure the end-to-end runtime of the complete attack on BERT-base with
SST-2 while varying the mask rank. The attack
takes 11.531 seconds at $r=1$ and 17.170 seconds at $r=32$, with intermediate
settings falling within this range. These results show that the attack remains
computationally practical across the evaluated mask ranks.

\section{Security Implications and Mitigation}
\label{sec:lessons}

The evaluation shows that \arrowcloak{}'s original rank-one masking remains vulnerable: \attackname{} achieves near-perfect permutation recovery and closely reproduces victim functionality. 
This observation reveals a shift from local to cross-vector leakage and
motivates a natural mitigation: increasing the mask rank.
We first discuss the broader security implications of this shift and
then evaluate the resulting security--efficiency tradeoff.

\subsection{Local and Cross-Vector Leakage}
\label{sec:security-implications}

Our attack, \attackname{}, recovers the hidden permutation and reconstructs
victim functionality even after \arrowcloak{} disrupts the per-vector
directional similarity exploited by \arrowmatch{}. This result complements
\textit{Game of Arrows} rather than extending the same leakage mechanism:
\arrowmatch{} exploits local geometry preserved within individual vectors by
scaling and permutation, whereas \attackname{} exploits cross-vector structure
created when \arrowcloak{} reuses a shared mask direction, across the released vectors.

Together, the two attacks show that concealing one local invariant does not
secure a joint release if the protection mechanism introduces dependencies
across released values. Confidential-offloading defenses should therefore be
evaluated under the complete accelerator view, including public initialization
checkpoints, per-vector invariants, and dependencies across
jointly released weights. Designs that amortize trusted work by reusing low-dimensional secret
state should be tested against adaptive attacks that estimate or
remove that state.

\subsection{The Security--Efficiency Tradeoff of Higher-Rank Masking}
\label{sec:rq4-mitigation}

\arrowcloak{} independently scales and permutes the victim-weight vectors and
adds a mask to each vector. Although these masks disrupt the pairwise
directional similarity exploited by \arrowmatch{}, they are not independent
across the released vectors. In the original construction, every mask lies in
one shared direction; in the rank-$r$ extension, the masks lie in a shared
$r$-dimensional subspace. This reuse creates a coherent matrix-level signal
that can be estimated and removed when the subspace is spectrally visible.

Increasing the mask rank is thus a natural mitigation. 
A larger $r$ leaves fewer complementary dimensions for matching and places a larger fraction of each victim-weight vector inside the ambiguous mask subspace. 
We therefore test whether increasing $r$ can successfully disrupt structural recovery, functional recovery, or both.

\bheading{Mitigation effectiveness.}
We evaluate high mask rank on BERT-base with SST-2, whose masking space has dimension 768. 
We vary the mask rank from 64 to 512 and apply the automatic rank estimator and the complete attack at each rank. 
We report the estimated rank, permutation accuracy, and prediction agreement with the victim.

\begin{table}[!t]
\caption{Structural and functional recovery under higher-rank masking on
BERT-base with SST-2; permutation accuracy and agreement are in percent.}
\label{tab:higher-rank-mitigation}
\centering
\begingroup
\renewcommand{\arraystretch}{1.08}
\footnotesize
\begin{tabular*}{\columnwidth}{
  @{\extracolsep{\fill}}
  rrrr
  @{}
}
\toprule
True $r$ & Est. $\widehat{r}$ & Perm. acc. & Agreement \\
\midrule
64  & 64  & 100.00 & 95.53 \\
128 & 128 & 100.00 & 90.02 \\
256 & 256 & 100.00 & 74.43 \\
512 & 512 &  99.98 & 48.05 \\
\bottomrule
\end{tabular*}
\endgroup
\end{table}

Table~\ref{tab:higher-rank-mitigation} shows that increasing the mask rank primarily degrades functional reconstruction rather than rank estimation or permutation matching. 
The automatic estimator returns the true rank at every tested setting, while permutation accuracy remains 100\% through $r=256$ and 99.98\% at $r=512$.
This persistence follows from the coherent, spectrally identifiable structure of the rank-$r$ mask. 
SVD recovers its dominant directions, and projecting them out leaves sufficient directional information for column matching. 
Even at $r=512$, the complementary subspace retains 256 dimensions.

Functional recovery responds differently. 
Agreement with the victim decreases from 95.53\% at $r=64$ to 48.05\% at $r=512$, compared with 99.54\% for the original rank-one configuration. 
Within the estimated mask subspace, the victim component is confounded with the additive mask. 
As $r$ increases, this ambiguous component occupies a larger fraction of each weight vector, and the coefficient fit cannot recover a growing portion of the victim
weights. 
Higher rank therefore leaves the structural correspondence observable but prevents it from supporting accurate functional reconstruction.

\bheading{TEE computation overhead.}
The original rank-one design requires the TEE to apply one mask direction
during each protected linear operation. A rank-$r$ construction applies $r$
directions, so the rank-dependent part of the TEE correction requires $r$
times the arithmetic of its rank-one counterpart.

We normalize this cost against a no-offloading baseline in which all 72
protected BERT-base matrix multiplications execute inside the TEE. Under the
original rank-one design, the correction requires only 0.20\% of this
baseline. For each protected matrix, the rank-$r$ correction scales with $r$
times the sum of its input and output dimensions, whereas full TEE execution
scales with their product. Aggregating these operation counts over the 72
protected matrices gives a correction-to-full-TEE ratio of $r/512$.

At $r=512$, the setting that reduces victim agreement to 48.05\%, the TEE
correction therefore requires approximately as much arithmetic as executing
all protected matrix multiplications inside the TEE. This comparison reflects
theoretical operation counts rather than measured latency, which depends on
the TEE hardware and implementation. High-rank masking can thus suppress
functional leakage, but at $r=512$ it does so at the cost of increasing trusted
computation inside TEE.

\section{Related Work}
\label{sec:related}

TEE-assisted inference systems place the GPU on one of two sides of the trust boundary. 
Graviton, Telekine, and NVIDIA Confidential Computing on H100 GPUs place GPU execution inside the trusted boundary and provide hardware-backed isolation and attestation~\cite{graviton,telekine,nvidia_h100_cc}. 
CPU-TEE partitioning systems instead treat the GPU as untrusted: expensive linear operations run on the accelerator, whereas secrets and selected operations remain inside the CPU TEE~\cite{slalom}. 
Our attack targets the information exposed by this latter design choice, not the security of the TEE itself.

Systems in this setting differ in what they protect and how they transform inputs or model weights before outsourcing computation. 
Slalom protects private inputs and verifies outsourced linear operations~\cite{slalom}.  
SOTER additionally protects model parameters through parameter morphing and checks the untrusted execution using oblivious fingerprints~\cite{shen2022soter}.  
ShadowNet transforms the weights of linear layers before exposing them to an accelerator, restores the corresponding outputs inside the TEE, and retains nonlinear layers within the trusted boundary~\cite{sun2023shadownet}.
These designs share a central efficiency constraint: weight transformation and output recovery on the trusted CPU must remain substantially cheaper than the matrix multiplication delegated to the GPU.

The closest prior security analysis is \textit{Game of Arrows}~\cite{gameofarrows2025}, which shows that lightweight scaling and permutation preserve the directions of protected weight vectors.
Its \arrowmatch{} attack uses a public pre-trained model to recover the hidden correspondence from this directional similarity and then uses query-labeled data to adjust the recovered vector lengths.  
The same work introduces \arrowcloak{}, which adds a shared mask direction to disrupt direct direction matching.  

A recent concurrent preprint, \textit{MOSAIC}, questions \arrowcloak{}'s LWE-based security argument by observing that replacing the fine-tuning residual with fresh independent Gaussian noise produces distinguishable outputs. 
This critique targets the reduction rather than the concrete scheme and does not by itself provide a practical weight- or key-recovery attack~\cite{chiang2026mosaic}.
Our attack targets \arrowcloak{} itself. \attackname{} estimates and removes the shared low-rank mask subspace before recovering permutation. It reconstructs a functional surrogate without query-labeled data or knowledge of the secret transformation parameters. The analysis also extends from the published rank-one construction to rank-$r$ masks.

\section{Conclusion}

In this paper, we reassessed the LWE-based security claim of \arrowcloak{} and showed that
its LWE-shaped reformulation does not transfer standard LWE hardness to model
recovery. Examining the
concrete construction, we found that reusing a shared mask subspace leaves a
low-rank signal unchanged by the hidden permutation. We exploit this leakage
with \attackname{}, a polynomial-time attack that uses only the public
pre-trained and exposed obfuscated weights to estimate and remove the mask
subspace, recover the hidden permutation, and reconstruct the
victim weights. We establish sufficient spectral-separation conditions for
accurate recovery of the mask-removal projector. Our evaluation validates
this leakage: across six protected-model configurations, \attackname{}
recovers 99.92--100\% of the hidden permutation and produces surrogate
models that closely reproduce victim functionality.

\section*{Ethics Considerations}
This work examines the confidentiality guarantees of \arrowcloak{} and
identifies a vulnerability that could enable the unauthorized recovery of
proprietary models. Our experiments were conducted locally using public
checkpoints and datasets and the released \arrowcloak{} implementation; we did
not target deployed third-party systems, access private user data, or exploit
TEE vulnerabilities. We report these findings to support independent security
evaluation and the development of stronger confidential-inference defenses.

\bibliographystyle{IEEEtran}
\bibliography{reference}

\appendices

\section{Matching Stability and Permutation Recovery}
\label{app:matching-guarantee}

Section~\ref{sec:analysis} bounds the error of the estimated mask-removal
projector. We now show when this error is small enough to preserve the
projected cosine similarities and the hidden one-to-one correspondence.

Let $\mathcal W$ contain all nonzero public and exposed columns used in
matching, and define
\begin{equation}
  \eta
  =
  \|\widehat P-P\|_2,
  \qquad
  \rho
  =
  \min_{w\in\mathcal W}
  \frac{\|Pw\|_2}{\|w\|_2}.
  \label{eq:appendix-eta-rho}
\end{equation}
The quantity $\rho$ measures the smallest fraction of column norm retained
after ideal mask removal. A positive $\rho$ excludes columns that lie almost
entirely in the mask direction.

\begin{lemma}[Normalized projection stability]
\label{lem:normalized-projection-stability}
Assume $\rho>0$ and $\eta<\rho$. Then, for every
$w\in\mathcal W$,
\begin{equation}
  \left\|
    \frac{\widehat Pw}{\|\widehat Pw\|_2}
    -
    \frac{Pw}{\|Pw\|_2}
  \right\|_2
  \leq
  \frac{2\eta}{\rho}.
  \label{eq:appendix-normalized-projection}
\end{equation}
\end{lemma}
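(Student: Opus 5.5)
The plan is to use the standard bound on differences of normalized vectors: for nonzero $a,b$,
\[
  \left\|\frac{a}{\|a\|_2}-\frac{b}{\|b\|_2}\right\|_2\le\frac{2\|a-b\|_2}{\|b\|_2}.
\]
We apply it with $a=\widehat Pw$ and $b=Pw$, and then control $\|a-b\|_2$ and $\|b\|_2$ using $\eta$ and $\rho$.

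First, we check that both normalized vectors are well defined. By definition of $\rho$, $\|Pw\|_2\ge\rho\|w\|_2>0$ for every $w\in\mathcal W$, since $w\neq 0$. By the operator-norm definition of $\eta$ (as in Eq.~\eqref{eq:analysis-projector-column-error}), $\|\widehat Pw-Pw\|_2\le\eta\|w\|_2$. The reverse triangle inequality then gives $\|\widehat Pw\|_2\ge(\rho-\eta)\|w\|_2>0$, using the hypothesis $\eta<\rho$. So $\widehat Pw\neq 0$ and the left-hand side of the claimed inequality is defined.

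Second, we prove the normalization inequality. Write
\[
  \frac{a}{\|a\|}-\frac{b}{\|b\|}=\frac{a-b}{\|b\|}+a\Bigl(\frac{1}{\|a\|}-\frac{1}{\|b\|}\Bigr).
\]
The first term has norm $\|a-b\|/\|b\|$. The second has norm $\bigl|\|b\|-\|a\|\bigr|/\|b\|\le\|a-b\|/\|b\|$ by the reverse triangle inequality. Summing the two gives the factor $2$.

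Third, we substitute: $\|a-b\|_2\le\eta\|w\|_2$ and $\|b\|_2\ge\rho\|w\|_2$. The factor $\|w\|_2$ cancels, yielding $2\eta/\rho$.

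The only delicate point is the choice of normalization in the splitting. We must divide by $\|Pw\|_2$, which $\rho$ controls, rather than by $\|\widehat Pw\|_2$, so that the constant comes out as $2\eta/\rho$ instead of $2\eta/(\rho-\eta)$. The hypothesis $\eta<\rho$ is used only to guarantee $\widehat Pw\neq 0$.
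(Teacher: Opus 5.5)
Your proof is correct and follows essentially the same route as the paper. Both arguments use $\|\widehat Pw-Pw\|_2\le\eta\|w\|_2$ and $\|Pw\|_2\ge\rho\|w\|_2$, invoke $\eta<\rho$ only to ensure $\widehat Pw\neq 0$, and bound the normalized difference by $2\|\widehat Pw-Pw\|_2/\|Pw\|_2\le 2\eta/\rho$; your add-and-subtract decomposition simply spells out the triangle-inequality step the paper states in one line.
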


\begin{IEEEproof}
Let $a=Pw$ and $\widehat a=\widehat Pw$. By
Eq.~\eqref{eq:appendix-eta-rho},
\begin{equation}
  \|\widehat a-a\|_2
  \leq
  \eta\|w\|_2,
  \qquad
  \|a\|_2
  \geq
  \rho\|w\|_2.
  \label{eq:appendix-projection-perturbation}
\end{equation}
Moreover,
\begin{equation}
  \|\widehat a\|_2
  \geq
  (\rho-\eta)\|w\|_2
  >0.
\end{equation}
Thus both normalized vectors are well defined. The reverse triangle
inequality gives
\begin{equation}
\begin{aligned}
  \left\|
    \frac{\widehat a}{\|\widehat a\|_2}
    -
    \frac{a}{\|a\|_2}
  \right\|_2
  &\leq
  \frac{
    \big|\|a\|_2-\|\widehat a\|_2\big|
  }{\|a\|_2}
  +
  \frac{\|\widehat a-a\|_2}{\|a\|_2} \\
  &\leq
  \frac{2\|\widehat a-a\|_2}{\|a\|_2}
  \leq
  \frac{2\eta}{\rho}.
\end{aligned}
\end{equation}
\end{IEEEproof}

Define the oracle cosine score
\begin{equation}
  S^\star_{ij}
  =
  \operatorname{cos}
  \left(
    Pw_{\mathrm{pre},i},
    Pw_{\mathrm{obf},j}
  \right),
  \label{eq:appendix-oracle-score}
\end{equation}
which uses the unavailable true projector. Let $S_{ij}$ be the estimated
score in Eq.~\eqref{eq:projected-cosine-score}. Applying
Lemma~\ref{lem:normalized-projection-stability} to both normalized arguments
of the cosine and adding and subtracting their mixed inner product gives
\begin{equation}
  |S_{ij}-S^\star_{ij}|
  \leq
  \frac{4\eta}{\rho}.
  \label{eq:appendix-score-perturbation}
\end{equation}
For public column $i$, define
\begin{equation}
  \epsilon_i
  =
  \max_j|S_{ij}-S^\star_{ij}|
  \leq
  \frac{4\eta}{\rho},
  \label{eq:appendix-score-error}
\end{equation}
and the oracle matching margin
\begin{equation}
  \mu_i
  =
  S^\star_{i,\pi(i)}
  -
  \max_{j\neq\pi(i)}
  S^\star_{ij}.
  \label{eq:appendix-matching-margin}
\end{equation}

\begin{theorem}[Permutation recovery]
\label{thm:permutation-recovery}
If $\mu_i>2\epsilon_i$ for every $i$, then the linear assignment in
Eq.~\eqref{eq:attack-global-assignment} uniquely recovers the hidden
permutation $\pi$.
\end{theorem}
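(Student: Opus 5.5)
The plan is to reduce global optimality of $\pi$ in Eq.~\eqref{eq:attack-global-assignment} to a \emph{row-wise} strict dominance of the estimated scores. The margin condition $\mu_i>2\epsilon_i$ is exactly what survives the perturbation $|S_{ij}-S^\star_{ij}|\le\epsilon_i$. Once each row of $S$ has its strict maximum at the true column $\pi(i)$, and $\pi$ is itself a bijection, no other permutation can tie or beat it.

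\textbf{Step 1: the true column is the strict row-wise maximum of $S$.} Fix a public column $i$ and any $j\neq\pi(i)$. By the definition of $\epsilon_i$ in Eq.~\eqref{eq:appendix-score-error}, $S_{i,\pi(i)}\ge S^\star_{i,\pi(i)}-\epsilon_i$ and $S_{ij}\le S^\star_{ij}+\epsilon_i$. By the definition of $\mu_i$ in Eq.~\eqref{eq:appendix-matching-margin}, $S^\star_{i,\pi(i)}\ge S^\star_{ij}+\mu_i$. Chaining these gives
\[
S_{i,\pi(i)}-S_{ij}\;\ge\;\mu_i-2\epsilon_i\;>\;0 .
\]
So in every row $i$, the entry at column $\pi(i)$ strictly exceeds every other entry of that row.

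\textbf{Step 2: sum over rows.} Take any $\sigma\in\mathcal S_n$ with $\sigma\neq\pi$. For each $i$, Step~1 gives $S_{i,\sigma(i)}\le S_{i,\pi(i)}$. The inequality is strict for every $i$ with $\sigma(i)\neq\pi(i)$, and there is at least one such $i$. Summing over $i$ yields
\[
\sum_i S_{i,\sigma(i)}<\sum_i S_{i,\pi(i)} .
\]
Since $\pi$ is a feasible bijection, it is therefore the unique maximizer of the assignment objective. Any exact linear-assignment solver, in particular the Hungarian algorithm applied to $-S$, must return it, so $\widehat\pi=\pi$.

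\textbf{Main point of care.} The argument itself has no real obstacle; it relies only on the score-perturbation bound already established through Lemma~\ref{lem:normalized-projection-stability} and Eq.~\eqref{eq:appendix-score-perturbation}. The one thing to state explicitly is why row-wise dominance suffices for a global assignment problem. In general a greedy row-wise argmax need not be a permutation, and a global optimum can then differ from it. Here that difficulty does not arise, because the row-wise maximizers are precisely the bijection $\pi$, so the termwise comparison in Step~2 applies directly.

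I would also note that the hypotheses $\rho>0$ and $\eta<\rho$ of Lemma~\ref{lem:normalized-projection-stability} are implicitly needed for the cosines in $S$ to be well defined. Finally, I would remark that the condition $\mu_i>8\eta/\rho$ is a sufficient, checkable surrogate for $\mu_i>2\epsilon_i$.
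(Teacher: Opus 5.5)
Your proof is correct and takes the same approach as the paper: the paper also derives $S_{i,\pi(i)}-S_{ij}\ge\mu_i-2\epsilon_i>0$ in each row, then concludes that $\pi$, as a bijection selecting every row-wise maximum, is the unique optimal assignment returned by the Hungarian algorithm. Your Step~2 spells out the termwise summation that the paper leaves implicit in that final step.
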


\begin{IEEEproof}
For every $j\neq\pi(i)$,
\begin{equation}
\begin{aligned}
  S_{i,\pi(i)}-S_{ij}
  &\geq
  S^\star_{i,\pi(i)}-S^\star_{ij}-2\epsilon_i \\
  &\geq
  \mu_i-2\epsilon_i
  >0.
\end{aligned}
\end{equation}
Thus, the correct exposed column is the unique highest-scoring match in every
row of $S$. Because $\pi$ is bijective, it is the unique permutation that
simultaneously selects all row-wise maxima. The Hungarian algorithm therefore
returns $\pi$.
\end{IEEEproof}

Combining Eq.~\eqref{eq:appendix-score-error} with
Eq.~\eqref{eq:analysis-projector-bound} gives the explicit sufficient
condition
\begin{equation}
  \mu_i
  >
  \frac{8}{\rho}
  \sqrt{\frac{2\delta}{\lambda_{\mathrm{mask}}}}
  \qquad
  \text{for every }i.
  \label{eq:appendix-combined-condition}
\end{equation}
This condition separates the two requirements for exact matching: the
estimated mask projector must be accurate, and the ideal projected columns
must retain enough norm and pairwise separation.

\section{Finite-Width Checks for Spectral Dominance}
\label{app:finite-width-diagnostics}

Section~\ref{sec:analysis} predicts spectral dominance when the shared mask
vector combines victim columns without substantial cancellation and the
scaled victim columns have bounded average squared norm. These are sufficient
conditions used to explain the realized recovery measurements in
Table~\ref{tab:analysis-validation}, rather than attack-success metrics by
themselves. We examine whether they are consistent with the protected
matrices at their deployed widths.

For each protected matrix, we draw \arrowcloak{} coefficients
$k_i\sim\operatorname{Unif}\{0,1,2,3,4\}$, form
$\bm v=W_{\mathrm{vic}}\bm k$, and record
\begin{equation}
  r_v=\frac{\opnorm{\bm v}^2}{n},
  \qquad
  c_v=
  \frac{\opnorm{\bm v}^2}
       {\sum_{i=1}^{n}k_i^2\opnorm{\bm w_i}^2}.
  \label{eq:finite-width-diagnostics}
\end{equation}
The normalized energy $r_v$ checks whether $\opnorm{\bm v}^2$ remains on the
order of $n$ at the evaluated widths. The cancellation ratio $c_v$ compares
the realized energy of the sum with the sum of the individual contribution
energies; a value near zero would indicate that the victim columns largely
cancel one another. We also measure $\|X\|_F^2/n$ for
$X=W_{\mathrm{vic}}D_1$, which is the average squared norm of the scaled
victim columns.

\begin{table}[!t]
\caption{Finite-width checks for the spectral-dominance premises.
$r_v$ and $c_v$ are the median (minimum) of per-layer medians over 100
defense draws; $\|X\|_F^2/n$ is the median (maximum) across layers.}
\label{tab:omega-v}
\centering
\begingroup
\setlength{\tabcolsep}{4pt}
\renewcommand{\arraystretch}{1.12}
\scriptsize
\begin{tabular*}{\columnwidth}{@{\extracolsep{\fill}}lccc@{}}
\toprule
Model/task &
\shortstack{$r_v$ $\uparrow$\\
med (min)} &
\shortstack{$c_v$ $\uparrow$\\
med (min)} &
\shortstack{$\|X\|_F^2/n$\\
med (max)} \\
\midrule
ViT-B/16/CIFAR-100 & 2.49 (0.56) & 0.99 (0.33) & 0.98 (6.29) \\
BERT-base/SST-2    & 8.86 (1.36) & 0.99 (0.36) & 3.19 (12.22) \\
GPT-2/SST-2        & 92.30 (13.15) & 0.96 (0.35) & 32.58 (301.87) \\
\bottomrule
\end{tabular*}
\endgroup
\end{table}

Table~\ref{tab:omega-v} shows that the shared direction retains
non-negligible energy at the evaluated widths. Across all protected matrices,
the minimum observed $r_v$ is 0.56 and the minimum $c_v$ is 0.33, providing
no evidence of severe cancellation. The centered mask coefficients have
empirical variance approximately $10$, consistent with their constant-order
variance in the analysis. Although GPT-2 has a larger absolute column scale,
the within-backbone Spearman correlations between $\|X\|_F^2/n$ and $n$ are
$-0.27$, $-0.13$, and $-0.25$ for ViT, BERT, and GPT-2, respectively; thus,
we observe no increase with width in these finite models. These measurements
are consistent with the non-degeneracy premises at the deployed widths, but
do not by themselves prove the asymptotic $\Omega(n)$ and $O(n)$ conditions.

\section{Supplementary Evaluation}
\label{app:attacker-uncertainty}

\subsection{RQ5: Robustness to Attacker Uncertainty}

The rank-$r$ extension automatically estimates the mask rank from the exposed
spectrum, while correspondence matching requires a compatible public
reference. RQ5 evaluates rank inference and reference selection using only
the exposed weights and public information.

\noindent\textbf{Unknown mask rank.}
We evaluate automatic rank inference on BERT-base with SST-2, whose masking
space has dimension 768. For this evaluation, we instantiate the acceptance
thresholds as $\tau_{\mathrm{gap}}=10$ and
$\tau_{\mathrm{sep}}=1.5$. We vary the true mask rank over
$r\in\{2,4,\ldots,512\}$ and apply the automatic estimator described in
Section~\ref{sec:rank-k-extension} independently to each protected layer.

\begin{table}[!t]
\caption{Automatic rank inference and end-to-end recovery on BERT-base with
SST-2; accuracies are in percent.}
\label{tab:rank-k-recovery}
\centering
\begingroup
\renewcommand{\arraystretch}{1.07}
\footnotesize

\setlength{\tabcolsep}{3.2pt}
\begin{tabular*}{\columnwidth}{@{\extracolsep{\fill}}rrrr@{}}
\toprule
\multicolumn{4}{c}{\textit{Panel A: Rank-inference confidence}} \\
\midrule
\shortstack{True\\$r$}
& \shortstack{$\widehat{r}=r$\\layers}
& \shortstack{Minimum\\$g_{(1)}$}
& \shortstack{Minimum\\$g_{(1)}/g_{(2)}$} \\
\midrule
2   & 72/72 & 346.85 & 28.53 \\
4   & 72/72 & 379.03 & 25.62 \\
8   & 72/72 & 345.77 & 17.87 \\
16  & 72/72 & 341.39 & 15.03 \\
32  & 72/72 & 314.70 & 12.35 \\
64  & 72/72 & 306.22 & 11.40 \\
128 & 72/72 & 301.06 &  9.24 \\
256 & 72/72 & 232.27 &  7.51 \\
512 & 72/72 &  95.27 &  3.73 \\
\bottomrule
\end{tabular*}

\vspace{7pt}

\begin{tabular*}{\columnwidth}{@{\extracolsep{\fill}}rrrr@{}}
\toprule
\multicolumn{4}{c}{\textit{Panel B: Recovery with the inferred rank}} \\
\midrule
\shortstack{True\\$r$}
& $\widehat{r}$
& \shortstack{Perm.\\acc.}
& \shortstack{Recover\\acc.} \\
\midrule
2  & 2  & 100.00 & 91.51 \\
4  & 4  & 100.00 & 91.74 \\
8  & 8  & 100.00 & 92.20 \\
16 & 16 & 100.00 & 91.28 \\
32 & 32 & 100.00 & 91.97 \\
64 & 64 & 100.00 & 90.37 \\
\bottomrule
\end{tabular*}

\endgroup
\end{table}

Panel~A of Table~\ref{tab:rank-k-recovery} shows that the estimator recovers
the true rank in all 72 protected layers for every tested $r$. Even at
$r=512$, where the mask occupies two thirds of the 768-dimensional space,
the weakest layer has $g_{(1)}=95.27$ and
$g_{(1)}/g_{(2)}=3.73$, well above the respective acceptance thresholds.
Panel~B evaluates the complete attack using these inferred ranks. It achieves
100\% permutation accuracy for every end-to-end setting from $r=2$ to
$r=64$, while recovered accuracy remains between 90.37\% and 92.20\%.
Thus, the attack does not require oracle knowledge of the mask rank in the
evaluated settings. RQ4 separately examines the defensive effect of
increasing $r$ on functional recovery.

\noindent\textbf{Unknown public reference.}
The exposed tensor dimensions and GPU-side operations reveal the protected
layer shapes, allowing the attacker to narrow the search to
architecture-compatible public checkpoints. We evaluate reference selection
using GPT-2 as a case study. For each candidate, we automatically estimate
and remove the masking subspace, and then compare every projected exposed
column with all columns in the candidate reference. Let $s_{l,i}^{(1)}$ and
$s_{l,i}^{(2)}$ denote the largest and second-largest cosine similarities for
column $i$ in layer $l$. We define the reference-selection score as
\begin{equation}
  S=
  \frac{
    \sum_l\sum_{i=1}^{n_l}
    \left(s_{l,i}^{(1)}-s_{l,i}^{(2)}\right)
  }{
    \sum_l n_l
  },
  \label{eq:reference-selection-score}
\end{equation}
where $n_l$ is the number of protected columns in layer $l$. A larger score
indicates clearer and more consistent correspondences across the protected
layers.

\begin{table}[!t]
\caption{Public-reference selection among architecture-compatible GPT-2
checkpoints; accuracies are in percent.}
\label{tab:public-reference-mismatch}
\centering
\begingroup
\setlength{\tabcolsep}{1.4pt}
\renewcommand{\arraystretch}{1.08}
\scriptsize
\begin{tabular*}{\columnwidth}{@{\extracolsep{\fill}}llrrr@{}}
\toprule
Candidate & Relation &
\shortstack{Selection\\score} &
\shortstack{Perm.\\acc.} &
\shortstack{Recover\\acc.} \\
\midrule
Exact GPT-2
& \shortstack[l]{Victim\\initialization}
& 0.80472 & 100.00 & 91.06 \\

\shortstack[l]{Mission\\NoShuffle}
& \shortstack[l]{Independent\\pre-training}
& 0.00921 & 0.09 & 51.83 \\
\bottomrule
\end{tabular*}
\endgroup
\end{table}

The exact GPT-2 initialization receives a selection score of 0.80472 and
enables perfect permutation recovery. In contrast, the independently
pretrained checkpoint receives a score of
0.00921, yields only 0.09\% permutation accuracy, and reduces recovered
accuracy to 51.83\%. The selection score therefore separates compatible and
incompatible protected backbones in this candidate set.

\begin{rqresultbox}{5}
\attackname{} is robust to both forms of uncertainty when the exposed spectrum
contains a clear rank break and the candidate pool contains the correct
protected backbone. The mask rank need not be known in advance. However, an
independently pretrained checkpoint with the same architecture is not a
sufficient substitute when the correct protected backbone is unavailable.
\end{rqresultbox}

\end{document}